\newcommand{\soudeh}[1]{{\color{blue}{}}} 
\newcommand{\fixme}[1]{{\color{red}{}}} 
\newcolumntype{P}[1]{>{\centering\arraybackslash}p{#1}}
\newcolumntype{M}[1]{>{\centering\arraybackslash}m{#1}}
\newtheorem{assumption}[theorem]{Assumption}
\begin{document}

\title{KPsec: Secure End-to-End Communications for Multi-Hop Wireless Networks}

\author{Mohammed Gharib\inst{1}
\and
Ali Owfi\inst{2}
\and
Soudeh Ghorbani\inst{1}
}

\authorrunning{M. Gharib et. al.}

\institute{Johns Hopkins University, Baltimore, MD 21218, USA \\
\email{gharib@jhu.edu, soudeh@cs.jhu.edu} \\\and
Sharif University of Technology, Tehran, Iran\\
\email{owfi@ce.sharif.edu}}

\maketitle     

\begin{abstract}

The security of cyber-physical systems, from self-driving cars to medical devices, depends on their underlying multi-hop wireless networks. Yet, the lack of trusted central infrastructures and limited nodes' resources make securing these networks challenging. Recent works on key pre-distribution schemes, where nodes communicate over encrypted overlay paths, provide an appealing solution because of their distributed, computationally light-weight nature. Alas, these schemes share a glaring security vulnerability: the two ends of every overlay link can decrypt---and potentially modify and alter---the message. Plus, the longer overlay paths impose traffic overhead and increase latency.

We present a novel routing mechanism, KPsec, to address these issues. KPsec deploys multiple disjoint paths and an initial key-exchange phase to secure end-to-end communications. After the initial key-exchange phase, traffic in KPsec follows the shortest paths and, in contrast to key pre-distribution schemes, intermediate nodes cannot decrypt it. We measure the security and performance of KPsec as well as three state-of-the-art key pre-distribution schemes using a real 10-node testbed and large-scale simulations. Our experiments show that, in addition to its security benefits, KPsec results in $5-15\%$ improvement in network throughput, up to $75\%$ reduction in latency, and an order of magnitude reduction in energy consumption. 
 
 \keywords{Secure end-to-end communication \and multi-hop wireless network\and key pre-distribution\and performance evaluation.}
\end{abstract}

\section{Introduction}
\label{introduction}
Cyber physical systems (CPS) are increasingly deployed in mission-critical systems such as self-driving cars \cite{CPS1}. While most of such systems could be implemented with expensive infrastructure, the better solution is to implement them based on the peer-to-peer network node cooperation \cite{CPS2}. Smart intersections, where the cars never stop at a red light unless there will be actual crossing traffic, is an instance \cite{smartIntersect}. Vehicle to vehicle communication is another example while it can potentially help to prevent $74\%$ of all traffic accidents including those with drivers impaired by alcohol or drowsiness, as reported by national highway traffic safety administration of the U.S. \cite{NHTSA}. 
To be able to rely on these systems, the security of the underlying multi-hop wireless networks, such as mobile ad hoc networks (MANET), vehicular ad hoc networks (VANET), and wireless sensor networks (WSN), is critical. Alas, the lack of trusted infrastructures and limited node resources make securing communications in such networks challenging. Concretely, while cryptography is a general and powerful approach to improve security, it is not well suited for such networks. This is because cryptography techniques, such as public key infrastructure (PKI), commonly rely on a key management system and most of the key management tasks are assigned to a trusted third party (TTP) or several distributed TTPs that are based on infrastructure. In contrast, multi-hop wireless networks in cyber-physical systems are fully decentralized and lack a fixed infrastructure that can act as the TTP. Plus, nodes in such networks have limited memory, computational, and transmission resources. Consequently, the naive solution of storing all keys in every single node for encrypting and decrypting messages is also not practical in these networks, especially in large-scale ones.

Key pre-distribution schemes \cite{gligor} seem to be a promising solution due to their distributed and lightweight nature. Key pre-distribution schemes store just $k$ keys in each node, where $k<<n$ and $n$ is the number of network nodes. The set of stored keys in each node is referred to as its \emph{keyring}. Once a node encrypts a message with a key, only those nodes with a shared key are capable of decrypting it. Thus, a pair of nodes can communicate directly and securely if they share a common key. To establish a secure connection between two nodes without a shared key, a \emph{key-path} has to be found. The key-path is an overlay path in which each pair of adjacent nodes have a \emph{secure link} between them\footnote{Note that this secure overlay link may span multiple physical nodes,  in reality.}, i.e., they share a common key. To exchange messages, the source initially encrypts its message and forwards it to the first hop on the overlay. The message is then routed over the overlay where each intermediate hop, in turn, decrypts the data, encrypts it again with a key shared with the next hop, and forwards it to the next hop toward the destination.

Despite important differences between various classes of key pre-distribution techniques (such as symmetric and asymmetric cryptosystems) in terms of their routing mechanisms and the process of forming secure overlays (\S\ref{sec:relatedWork}), they fundamentally share a security vulnerability, known as \emph{intermediate D-E steps} or \emph{per hop key exposure} \cite{D-E} where the intermediate nodes on the key-path overlay can decrypt and encrypt messages. Since an attacker can compromise an intermediate node, any decryption-encryption (D-E) step raises a security threat. While enhancing the link-level security of the key pre-distribution schemes has been the focus of many recent works \cite{qcomp,ning,linkSecurity,linkSecurity2,linkSecurity3,linkSecurity4}, the holistic, end-to-end security of these schemes is relatively unexplored. In addition to this security concern, the performance of key pre-distribution schemes is not ideal because their overlay paths are commonly longer than the physical shortest paths. The resulted path stretch leads to performance degradation, e.g., increased latency and network overhead, as we quantify in \S\ref{sec:experimental}.

In this paper, we propose Key Pre-distribution security (KPsec), a high-performance algorithm to establish end-to-end secure communications in multi-hop wireless networks. Under KPsec, the source and the destination first engage in an initial phase of exchanging public keys via multiple disjoint paths. KPsec leverages a state-of-the-art asymmetric key pre-distribution technique, probabilistic asymmetric key pre-distribution (PAKP) \cite{waina}, as a building block to initially exchange public keys. This step is followed by constructing shared keys for this communicating pair before they start secure communication over the shortest paths. Despite the initialization cost and delay, we show that the amortized latency overhead is low in our scheme compared to the state-of-the-art. This is because upon constructing a common key, under KPsec, traffic follows the shortest path, in lieu of the longer overlays deployed in key pre-distribution techniques. KPsec is not subject to passive attacks due to exchanging only public keys. Moreover, we show that it has high resiliency against active attacks (\S\ref{sec:experimental}).

Concretely, the core idea of KPsec is an initial key exchange process that results in a pairwise key agreement\footnote{We show in \S\ref{sec:experimental} the energy efficiency enabled by this approach.} between the source and destination. After the completion of the key-exchange phase, messages between these two nodes will be encrypted using this key and can be decrypted only by the source and the destination. Thus, these messages can be routed over the shortest physical paths, avoiding the longer key-path overlays without compromising security. Applied naively, this technique is prone to man-in-the-middle attacks where an intermediate node that participates in the key-exchange process replaces the actual key with its own key. In such a case, the intermediate node can read and potentially alter the message while remaining hidden from the source and the destination. To increase the resiliency against this type of attack, nodes in KPsec exchange keys via multiple, and disjoint, paths using erasure coding (\S\ref{sec:proposedAlgorithm}).

Despite its security and performance benefits, KPsec causes a few concerns. First, while path redundancy improves security, the communication can still be vulnerable to more sophisticated forms of attack such as distributed, coordinated man-in-the-middle attacks where a group of compromised nodes agrees on a forged key to replace the actual key. We experimentally show that KPsec has strong resiliency against this type of attack: the attacker needs to compromise $O(n)$ nodes to be able to get access to the secret data.
Second, for the proposed algorithm to work, there must be enough reasonably short overlay vertex disjoint paths for the initial step of exchanging keys. In Section (\ref{sec:proposedAlgorithm}), we investigate the expected number and lengths of these paths. Our results show that, although a large number of disjoint paths improves security, KPsec results in high degrees of security even with small number of such paths, e.g., for relatively short paths with 3 D-E steps, KPsec's use of only 5 disjoint paths leads to 99.9\% resiliency against node capture (\S\ref{sec:proposedAlgorithm}). Third, the initial key-exchange phase causes some control overhead. Our measurements show that the amortized traffic overhead is low. This is because once the key-exchange phase terminates, traffic follows the shortest paths, eliminating the path stretch and compensating for the commencing control overhead. For a network with 100 nodes, for example, KPsec results in almost equal control traffic compared to three state-of-the-art key pre-distribution schemes that we use as baselines and 7.5\% enhancement in throughput (\S\ref{sec:experimental}).

To comprehensively evaluate the performance and security of KPsec, we implement it on a 10-node testbed and a large-scale ns-2 network simulator \cite{ns2}. In addition to KPsec, we implement three state-of-the-art key pre-distribution schemes: PAKP \cite{waina}, unital key pre-distribution (UKP) \cite{TWC2013}, and strong Steiner trade (SST) \cite{infocom2011}. Moreover, to make the end-to-end connections in UKP and SST secure, we augment these algorithms using the design presented in \cite{globcom05} (hereafter called augmented UKP and SST), a general remedy for the intermediate D-E steps problem which is applicable to any symmetric key pre-distribution scheme. Our experiments show that, compared to these baselines, KPsec results in $5-15\%$ throughput improvement, reduces the network latency by $50-75\%$, and alleviates the energy consumption up to an order of magnitude.

Although the performance of KPsec and augmented UKP and SST are close, KPsec results in substantial security improvements, as it is the only scheme that is secure against passive attacks. Plus, an active attacker needs to compromise $O(n)$ nodes to access data in KPsec, a substantially larger fraction compared to augmented UKP and SST. Moreover, contrary to other schemes that suffer from the secret information leakage, compromising a few nodes in KPsec does not enable an attacker to access any secret information. Finally, while in other schemes, sophisticated attackers such as those carrying out selective node compromise attacks can compromise the entire network communications by capturing only a few nodes, in KPsec, compromising the entire network requires the attacker to capture $O(n)$ nodes, e.g., in a network with 100 nodes, to access data, the attacker needs to capture, respectively, 10 and 23 nodes in augmented UKP and SST, compared to 99 nodes in KPsec. 

The main contribution of this paper is KPsec, an algorithm to establish end-to-end secure communication in multi-hop wireless networks, and its thorough evaluation. More specifically, this paper proposes an algorithm to address the two key pre-distribution shortcomings, intermediate D-E steps, and the path stretch and studies its security and performance compared to the state-of-the-art algorithms using a real 10-node testbed and large-scale simulations.

\section{Related Work}
\label{sec:relatedWork}

Key pre-distribution schemes are categorized into two main categories based on their underlying cryptosystem, symmetric and asymmetric. In this section, we provide a brief comparison of these categories, the state-of-the-art techniques for each, and proposals for secure end-to-end communications using key pre-distribution in turn.

\subsection{Symmetric vs. Asymmetric Key Pre-distribution} 
The core idea of symmetric key pre-distribution schemes, which is also known as pairwise key pre-distribution, was first introduced by Eschenauer and Gligor \cite{gligor}. In this scheme, each keyring is chosen uniformly at random from a key-pool, with replacement. The main security shortcoming of the Eschenauer-Gligor design is that if an attacker compromises several nodes, it can access many keys from the key-pool. Thus, many links inside the network become insecure. Chan et al. \cite{qcomp} propose Q-composite algorithm to mitigate this security shortcoming by establishing secure links only between nodes that have at least $q$ common keys.

More recently, the concept of combinatorial block design is used in \cite{TWC2013,infocom2011} to build key pre-distribution schemes. 
Bechkit et al. \cite{TWC2013} propose a key pre-distribution scheme based on unital block design, referred to as naive unital key pre-distribution (NU-KP). The proposed scheme has a low key-sharing probability: $O(\frac{1}{k})$. To improve this probability, they suggest to pre-load each node with $t$ disjoint blocks and refer to the new design as t-UKP. Ruj et al. \cite{infocom2011} propose a method to construct strong Steiner trade (SST), a form of block design, and use it as a key pre-distribution scheme. SST establishes a unique secret pairwise key between nodes. It is proven that the probability of sharing such a pairwise key does not exceed $0.25$ \cite{TWC2013}. In our evaluations, we implement 2-UKP and SST as two well-known baseline schemes.

Liu et al. \cite{AKPS} introduce the idea of asymmetric key pre-distribution, relying on some keying material servers. Multi-hop wireless networks, however, do not always have access to keying servers. Probabilistic asymmetric key pre-distribution (PAKP) was subsequently proposed to consider this problem \cite{waina}. In this scheme, each node stores $k$ public keys chosen uniformly at random with replacement, from a key-pool containing all the public keys. In \cite{waina}, authors prove that in PAKP, for any $k\ge 3$, the probability of key-path existence is more than $99.9\%$, where the impact of increasing the number of nodes is negligible. They further prove that PAKP reduces the average number of D-E steps to $O(\log_k n)$. In comparison, this number is in the order of the physical path length in symmetric key pre-distribution schemes. In contrast to the random key distribution, authors of \cite{SIN} and \cite{AINA} propose and analyze several more realistic scenarios for asymmetric key distribution. 

While the general paradigm of key pre-distribution is similar for both categories---symmetric and asymmetric cryptosystems---the routing policies of these two categories have significant differences. In symmetric systems, a key-pool containing all the secret keys is formed. Any node is pre-loaded with a keyring chosen from the key-pool. During a shared key discovery process, any two adjacent nodes discover their secure link by checking whether they share a common key or not. Accordingly, to find a secure path from the source node to the destination, a physical path is first found. Subsequently, for any physical hop, if there is no secure link, a key-path is found. The transferred data is then encrypted by the source node, decrypted and encrypted again by each intermediate node until reaching the destination. 

In asymmetric key pre-distribution, on the other hand, the routing mechanism follows a reverse process: the key-pool is formed by the public keys of all nodes. Each node is pre-loaded by $k$ public keys chosen uniformly at random with replacement from the pool. Initially, a key-path from the source node to the destination has to be found. Subsequently, for any key-path hop, the corresponding physical path is selected. In this case, each key-path hop may contain several physical hops without decryption and encryption steps, since the overlay neighbors may be physically far away. Generally, there are three main differences between symmetric and asymmetric key pre-distribution schemes. First, the routing process follows a reverse routing procedure. Second, the distributed keys are not confidential. Third, the overlay links in asymmetric schemes are directed. 

In a key pre-distribution scheme, regardless of the symmetric or asymmetric nature of its relaying cryptosystem, there are some intermediate nodes which decrypt the data, encrypt it again, and forward it toward the destination. Since the adversary node may forge itself as an intermediate node, any D-E step is considered as a security threat. Moreover, the resulted path may also be longer than the shortest physical path, due to the absence of a direct secure link, which leads to performance degradation. 
We provide more details about these two categories of key pre-distribution techniques below.

\subsection{End-to-End Communication}
While the intermediate D-E steps problem was first introduced in \cite{D-E}, this work does not propose a solution. To the best of our knowledge, the algorithm of \cite{globcom05} is the first well-defined end-to-end solution for intermediate D-E steps. In this solution, the source node chooses a pairwise key, splits it into $\rho$ pieces, and sends each piece via different node-disjoint paths to the destination. In this way, the attacker needs to compromise at least one node from each node-disjoint path to retrieve the entire pairwise key and decrypt the data. To improve the performance of \cite{globcom05}, Li et al. \cite{co-next05} suggest using intermediate nodes as proxies, and then use multiple paths, each path with just one proxy, to send the key pieces. Gupta et al. \cite{LNCS06} propose their algorithm based on \cite{co-next05} by introducing some proxies as friends. They then use a publicly known function and only the key pieces of the friends to retrieve the pairwise key. Sheu et al. \cite{comcom07} propose using a group-based pairwise key to enhance the security of node-disjoint paths. However, this algorithm requires a group-key agreement. A security shortcoming shared across all these algorithms is their reliance on sending secret values (e.g., private keys) through hop-by-hop D-E steps to establish a pairwise key. The attacker will be able to access these values, and consequently encrypted messages, via compromising the intermediate nodes. Similar to KPsec, \cite{yair} strives to establish end-to-end secure communications by providing disjoint overlay paths. Unlike KPsec, however, it relies on a backbone infrastructure.
\section{KPsec: end-to-end secure communications}
\label{sec:proposedAlgorithm}
KPsec is, in essence, a three-phase algorithm---the source and the destination initially engage in a public key exchange process to build a common key (phases 1 and 2). Their messages are then encoded using this key and routed, securely and efficiently, over shortest paths (phase 3). After presenting an overview of the algorithm, we analyze its key aspects such as the number and the length of disjoint key-paths and its resilience against cooperative attacks in turn.

\subsection{The Three Phases of KPsec}
In the first phase, the goal of the source is to send its public key to the destination efficiently and securely. For this, KPsec leverages multiple vertex-disjoint paths and the notion of erasure coding. Erasure code, a method originally developed for forward error correction code under bit erasures, transforms a message into a longer coded message with redundant data pieces. This coded message is then broken into $\rho$ shares such that the original message can be recovered from any $\theta$ shares. After encoding and splitting its public key, KPsec then sends the $\rho$ shares to the destination over vertex disjoint paths. Splitting the key into $\rho$ shares and sending them over disjoint paths make the system more resilient against the man-in-the-middle attack---the attacker needs to compromise more nodes to be able to forge the public key.

In the second phase, the destination node collects the shares and extracts the public key of the source. It then encrypts its own public key using the public key of the source and sends this encrypted message via the shortest physical path toward the source.

In the third and final phase, both the source and the destination calculate a pairwise key. The source node then encrypts its data using the pairwise key and sends it toward the destination. The destination, in turn, decrypts the data using the same pairwise key. Although we could use any asymmetric cryptography algorithm, we deployed elliptic curve cryptography (ECC) \cite{ecc} because of its shorter key length and lower computational complexity compared to other asymmetric cryptography algorithms. 
In the rest of this section, after outlining our assumptions, we describe the details of each phase. Table (\ref{tableOfNotatio}) lists the notations that we will use throughout this paper.

\begin{assumption}
The asymmetric cryptosystem security strength is such that, by having the public key and other public parameters, the attacker is unable to compute the private key.
\end{assumption}
\begin{assumption}
When there is more than one path toward the destination and the source node randomly chooses one of them, the attacker cannot guess which path is chosen. 
\end{assumption}
\begin{table}[t]
\caption[Table of notations]{The table of notations. }
\begin{minipage}{\textwidth}
\begin{tabular}{ l | l }
$n$ & Number of network nodes \\
$k$ & Size of key-ring \\
$\rho$ & Sufficient number of vertex-disjoint paths\\
$\theta$ & Threshold for the number of duplicated keys\\ 
$P(i)$ & Erasure code polynomial in the order of $\theta$\\
$\mathcal{P}$ & Set of collected shares by the destination \\
$x_{i}$ & Private key of node $i$\\
$y_{i}$ & Public key of node $i$\\
$k_{sd}$ & The source-destination pairwise key\\
\end{tabular}
\label{tableOfNotatio}
\end{minipage} 
\end{table} 

\textbf{Phase 1:}
The source node chooses $\theta$ random numbers $a_1,a_2,\dots,a_\theta$ and forms the following formula: 
\begin{equation}
\label{polynomial}
P(x)=y_{src}+a_1 x+a_2 x^2+\dots+a_\theta x^\theta.
\end{equation}
This polynomial is used for coding where $P(0)$ is the public key of the source node. The source node calculates $P(i)$, $i=1,2,\dots,\rho$ and then calculates $sign(P(i))$ which is the value of $P(i)$ signed by the private key of the source node. It could be used to certify the correctness of the shares. The source node then sends each tuple $(i , P(i) , sign(P(i)))$ from the $i^\text{th}$ vertex-disjoint overlay path.

\textbf{Phase 2:}
In this phase, the destination collects $\theta$ shares and forms the set $\mathcal{P}$. It then calculates the public key of the source node as
\begin{equation}
P(0)=\sum_{i\in\mathcal{P}}^{} P(i) l_i,
\end{equation}
where $l_i$ is Lagrange multiplier and could be calculated as \begin{equation}
l_i=\prod_{j\in\mathcal{P}, j\neq i}^{}\frac{(0-j)}{(i-j)}.
\end{equation}
Note that the computational complexity of the mentioned erasure code is $O(\theta^2)$.
If $\theta=\rho$, the Lagrange multipliers become unique and thus each node can simply store them. In this case, the computational complexity of the code reduces to $O(\theta)$. By calculating the public key of the source node, the destination node can certify the shares by checking the sign of each share. The destination then encrypts its own public key with the public key of the source node and sends it through the shortest physical path. The source node decrypts the destination's public key. At this point, both ends have exchanged their public keys. 

\textbf{Phase 3:}
In principle, the source is now able to communicate with the destination directly and securely, using asymmetric encryption.
However, asymmetric encryption is known to be computationally complex and energy inefficient. Therefore, it is not an ideal choice for multi-hop wireless networks. KPsec uses symmetric encryption instead: upon receiving each other's public key, source and destination nodes calculate a pairwise key $k_{sd}$:
\begin{equation}
k_{sd}=x_s.y_d=x_d.y_s.
\end{equation}
Since in ECC the corresponding public key for the private key $x$ is calculated as $y=x. G$ where $G$ is the elliptic cure base point, the pairwise key $k_{sd}$ will be identical for both the source and the destination nodes. After this step, the source node can encrypt its data using the pairwise key and then sends it to the destination via the shortest physical path. The destination can also use the same key to decrypt the received data.

KPsec raises a few concerns. Specifically, the first phase of the algorithm relies on a number of disjoint paths. Its operation, security, and performance, therefore, hinges on the existence and lengths of such paths. Moreover, the resilience of the algorithm against cooperative attacks, where the attacker controls a fraction of all nodes, is not known. In the rest of this section, we perform a comprehensive analysis to address these concerns.

\subsection{Number and length of Vertex-Disjoint Key-Paths}
\label{sec::keyPathNo}
Before calculating the number and the length of vertex-disjoint overlay paths in the KPsec algorithm, we need to know how many vertex-disjoint paths KPsec requires. Equivalently, what is the proper value for parameter $\rho$? Furthermore, we need to know how we can find a set of vertex-disjoint paths. To answer the first question, we use the reliability analysis technique of \cite{trivedi}, referred to as the reliability of the series-parallel systems. In this technique, the reliability of the system is considered as the probability of system success which is equal to one minus the probability of attacker success. 
\begin{lemma}
Consider the probability of each intermediate node to be compromised as $p$, the reliability of multi-path systems is equal to
\begin{equation}
\label{eq::reliability}
R=1-(1-(1-p)^{\#DE})^{\rho},
\end{equation}
where $\#DE$ represents the number of intermediate D-E steps in each path.
\end{lemma}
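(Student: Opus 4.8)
The plan is to recast the first-phase topology of KPsec as a classical series--parallel reliability block diagram in the sense of \cite{trivedi}, and then simply read off the stated closed form. First I would fix the interpretation of ``attacker success''. Because the $\rho$ overlay paths carrying the erasure-coded shares are vertex-disjoint, and because each honest share is signed under Assumption~1, a man-in-the-middle adversary can only cause the destination to reconstruct a forged public key if it controls at least one intermediate D-E node on \emph{every} one of the $\rho$ paths; otherwise at least one honest, correctly signed share survives and, together with the threshold structure of the code, pins down the true value $P(0)=y_{src}$. Hence ``system success'' is exactly the event that at least one of the $\rho$ paths is entirely free of compromised nodes.

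Next I would compute the reliability of a single path. Under the lemma's model each intermediate node is independently compromised with probability $p$, so a path is attacker-free precisely when all of its $\#DE$ intermediate nodes are uncompromised; viewing each node as an independent component that ``works'' with probability $1-p$, the path is a series system with reliability
\begin{equation}
R_{\text{path}}=(1-p)^{\#DE}.
\end{equation}
Since the $\rho$ paths are vertex-disjoint, no node is shared between two of them, so the events ``path $i$ is attacker-free'' are mutually independent over $i=1,\dots,\rho$.

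Then I would compose the $\rho$ paths in parallel: the system fails only when all $\rho$ paths are simultaneously compromised, which by independence gives
\begin{equation}
1-R=\prod_{i=1}^{\rho}\bigl(1-R_{\text{path}}\bigr)=\bigl(1-(1-p)^{\#DE}\bigr)^{\rho},
\end{equation}
and rearranging yields $R=1-(1-(1-p)^{\#DE})^{\rho}$, as claimed. (If the D-E counts differ across paths, the same argument gives $R=1-\prod_i(1-(1-p)^{\#DE_i})$; the stated identity is the uniform case $\#DE_i\equiv\#DE$.)

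The main obstacle is not the algebra---that is the textbook series--parallel computation---but making the first step rigorous: showing that forging the reconstructed key genuinely requires hitting all $\rho$ disjoint paths rather than some smaller subset. This is where the erasure-code threshold (the case $\theta=\rho$, or a conservative bound when $\theta<\rho$), the per-share signatures of Phase~1, and Assumption~1 all have to be combined, so that a single surviving honest share cannot be overridden and the reconstruction is forced to the correct value. Once that identification of the success event is pinned down, the independence across paths follows immediately from vertex-disjointness and the independent node-compromise model, and the closed form drops out.
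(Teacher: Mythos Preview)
Your proposal is correct and follows the same series--parallel reliability argument as the paper: single-path reliability $(1-p)^{\#DE}$ from the series structure, then parallel composition over the $\rho$ disjoint paths to obtain the stated formula. The paper's own proof is a two-line sketch that simply asserts ``the attacker has to compromise at least an intermediate node from each path'' and multiplies; your additional care in tying the success event to the erasure-code threshold, the per-share signatures, and vertex-disjointness for independence is more than the paper itself provides, but it is the same route, not a different one.
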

\begin{proof}
In this analysis, each intermediate D-E step is considered as a reliability threat.
Hence, for a path to be reliable, it should be empty of any compromised node. Hence, the reliability of each path is equal to $(1-p)^{\#DE}$. The attacker has to compromise at least an intermediate node from each path to potentially becomes able to perform a successful attack. The probability of attacker success in each path is one minus the reliability of that path, i.e. $1-(1-p)^{\#DE}$. For $\rho$ disjoint paths, hence, the total reliability is equal to 
\begin{equation}
R=1-(1-(1-p)^{\#DE})^{\rho}.
\nonumber
\end{equation}
\end{proof} 
Fig. (\ref{Fig::reliability}) shows the quantitative results of Equation (\ref{eq::reliability}) for a network with $10\%$ of nodes being compromised, selected uniformly at random, i.e. $p=0.1$. While increasing the number of paths improves security, Fig. (\ref{Fig::reliability}) shows that after the first several paths, the security improvement of adding extra paths is negligible. 

\begin{figure}[t!]
\centering
\includegraphics[width=0.5\textwidth]{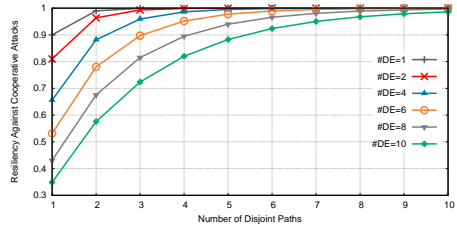}
\caption{The resiliency of multi-disjoint-path solutions against node capture.}
\label{Fig::reliability}
\end{figure}

To find the set of vertex-disjoint paths between any pair of source and destination vertices, we use the Ford-Fulkerson max-flow algorithm \cite{flow,fordFulkerson}. We know that the upper bound of the number of vertex-disjoint paths is $k$, because each node stores just $k$ keys, i.e. the source node has only $k$ neighbors in the overlay. The Ford-Fulkerson max-flow algorithm is known as a greedy algorithm capable of finding the set with the maximum number of edge-disjoint paths. However, our problem is to find the set of vertex-disjoint paths, not edge-disjoint. To find such a set, we modify this algorithm by replacing each vertex in our graph with two vertices which are connected with a directed edge, with a capacity of one.

\begin{lemma}
\label{maxFlow}
Consider directed graph $G(V,E)$, where $V$ and $E$ denote sets of vertices and edges, respectively, and the capacity of all edges is one. We modify $G$ to form a new graph $G'(V',E')$ by replacing each vertex $v_i$ with two vertices $v_{i1}$ and $v_{i2}$ and an edge from $v_{i1}$ to $v_{i2}$ with capacity one. Applying the Ford-Fulkerson algorithm on the modified graph results in a set with maximum number of vertex-disjoint paths in the main graph.
\end{lemma}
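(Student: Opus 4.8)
The plan is to prove the lemma via the standard equivalence between integral $s$--$t$ flows and families of disjoint paths, together with the fact that Ford--Fulkerson on integer capacities halts with a maximum \emph{integral} flow. Write $s$ and $t$ for the source and destination; in $G'$ I take the flow source to be the out-copy $s_{2}$ and the flow sink the in-copy $t_{1}$ (equivalently, leave the endpoints unsplit), so that the splitting gadget constrains only interior vertices --- this matches the intended use, where all disjoint key-paths share the endpoints but no intermediate node. The key observation is that a vertex $v_{i}$ of $G$ is ``used'' by a path precisely when that path traverses the gadget edge $v_{i1}\to v_{i2}$, and this edge has capacity one. Throughout, ``disjoint'' means internally vertex-disjoint.

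First I would show the easy direction: any family of $\rho$ internally vertex-disjoint $s$--$t$ paths $Q_{1},\dots,Q_{\rho}$ in $G$ induces a feasible integral flow of value $\rho$ in $G'$. Route one unit along the image of each $Q_{j}$ in $G'$, replacing each interior vertex $v_{i}$ by the gadget edge $v_{i1}\to v_{i2}$. Since the $Q_{j}$ share no interior vertex, no gadget edge carries more than one unit; since $G$ has unit edge capacities and the paths are in particular edge-disjoint, no original edge is overloaded either. Hence the max-flow value $|f^{*}|$ in $G'$ is at least the maximum number of vertex-disjoint $s$--$t$ paths in $G$.

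Conversely, I would take the maximum integral flow $f^{*}$ that Ford--Fulkerson returns (integrality holds because every capacity is $1$, so each augmenting step pushes an integral amount and the algorithm terminates) and apply the flow-decomposition theorem: $f^{*}$ decomposes into $|f^{*}|$ edge-disjoint $s_{2}$--$t_{1}$ paths together with some cycles, and the cycles can be discarded without changing the value, leaving exactly $|f^{*}|$ edge-disjoint paths in $G'$. Projecting each back to $G$ by contracting every gadget edge $v_{i1}\to v_{i2}$ to $v_{i}$ yields $s$--$t$ walks in $G$; because the $G'$-paths are edge-disjoint and each gadget edge has capacity one, every interior vertex $v_{i}$ lies on at most one of them, so after deleting any incidental loops we obtain $|f^{*}|$ internally vertex-disjoint $s$--$t$ paths in $G$. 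Combined with the previous paragraph, the maximum number of vertex-disjoint paths equals $|f^{*}|$, so running Ford--Fulkerson on $G'$ and decomposing the resulting flow produces a maximum-cardinality set of vertex-disjoint paths, as claimed.

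The main obstacle I anticipate is the bookkeeping in the projection step: one must argue that contracting gadget edges can neither merge vertices of two distinct decomposed paths (excluded by the capacity-one gadget edges) nor create a genuine repeated interior vertex inside a single decomposed path (excluded because revisiting $v_{i}$ would require reusing the unique gadget edge $v_{i1}\to v_{i2}$). A secondary point worth spelling out is the endpoint convention --- leaving $s$ and $t$ unsplit (or assigning their gadget edges infinite capacity) --- which is what permits all paths to share the endpoints while forbidding shared interior nodes; this is exactly the notion of disjointness needed for the erasure-coded share transmission in Phase~1.
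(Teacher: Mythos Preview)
Your argument is correct and in fact considerably more complete than what the paper gives. The paper's own proof is a two-sentence contradiction sketch: it supposes Ford--Fulkerson on $G'$ fails to return the maximum number of vertex-disjoint paths in $G$ and asserts that this would force some flow to leave a copy $v_{i1}$ toward a vertex other than its mate $v_{i2}$, contradicting the construction of $G'$. That is essentially just the observation that the gadget edge is the only outgoing edge of $v_{i1}$, and it leaves implicit both the integrality of the Ford--Fulkerson output and the passage from an integral flow to an explicit family of paths. Your route---establishing the two inequalities via the standard correspondence (vertex-disjoint paths in $G$ $\Leftrightarrow$ edge-disjoint paths in $G'$ $\Leftrightarrow$ integral unit flows in $G'$) and invoking flow decomposition---is the textbook treatment and actually delivers the paths, not merely their count. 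You also make explicit the endpoint convention (leaving $s,t$ unsplit or giving their gadget edges infinite capacity), which the paper does not mention but which is needed for the statement to be literally true. In short: same underlying idea, but your version supplies the missing scaffolding that the paper's contradiction sketch presupposes.
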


\begin{proof}
Assume, by contradiction, that the result of the Ford-Fulkerson algorithm on graph $G'$ does not return the maximum number of vertex-disjoint paths in graph $G$. This implies that there is at least a flow in graph $G'$ which passes through node $v_{i1}$ and then another node $v_{i2}'$ instead of passing $v_{i2}$. This, however, contradicts our assumption about $G'$ since in $G'$, there exists only a single edge with capacity one from each node $v_{i1}$ to $v_{i2}$. This is a contradiction and hence the results of the Ford-Fulkerson algorithm on the graph $G'$ returns the maximum number of vertex-disjoint paths in $G$.
\end{proof}

Fig. (\ref{disjointPaths}) shows the average number of disjoint paths and the distribution of their length. Fig. (\ref{disjointPaths}a) that includes the results for different numbers of nodes and different $k$ values shows two important facts. First, the number of vertex-disjoint paths is very close to the value of $k$. Second, increasing the number of nodes has a negligible impact on the number of vertex-disjoint paths. Collectively, Fig. (\ref{Fig::reliability}) and Fig. (\ref{disjointPaths}a) indicate that, with high probability, there will be enough number of vertex-disjoint paths for KPsec's operations. 

Fig. (\ref{disjointPaths}b) shows the distribution function of disjoint path length for different $k$ values in a graph with $1000$ nodes. This parameter is of paramount importance for the KPsec algorithm as a performance as well as a security metric. Although the encrypted data in the proposed algorithm follows the shortest physical path toward the destination, longer key-path length for vertex-disjoint paths leads to more network controlling traffic during the key-exchange process. Plus, longer key-paths mean more intermediate D-E steps and more vulnerability against cooperative attacks. 
Fig. (\ref{disjointPaths}b) shows that the length of the most vertex-disjoint paths is very close to the minimum key-path length reported in \cite{waina} and increasing the value of $k$ decreases the average key-path length and its variance which implies that the length of most disjoint key-paths is close to the average length. While not reported here, we investigate the same scenario for a fix $k$ value and different numbers of network nodes. The results are similar to those of Fig. (\ref{disjointPaths}b).

\begin{figure*}[t!]
\centering
\subfloat[Average number of disjoint paths.]{\includegraphics[width=0.5\textwidth]{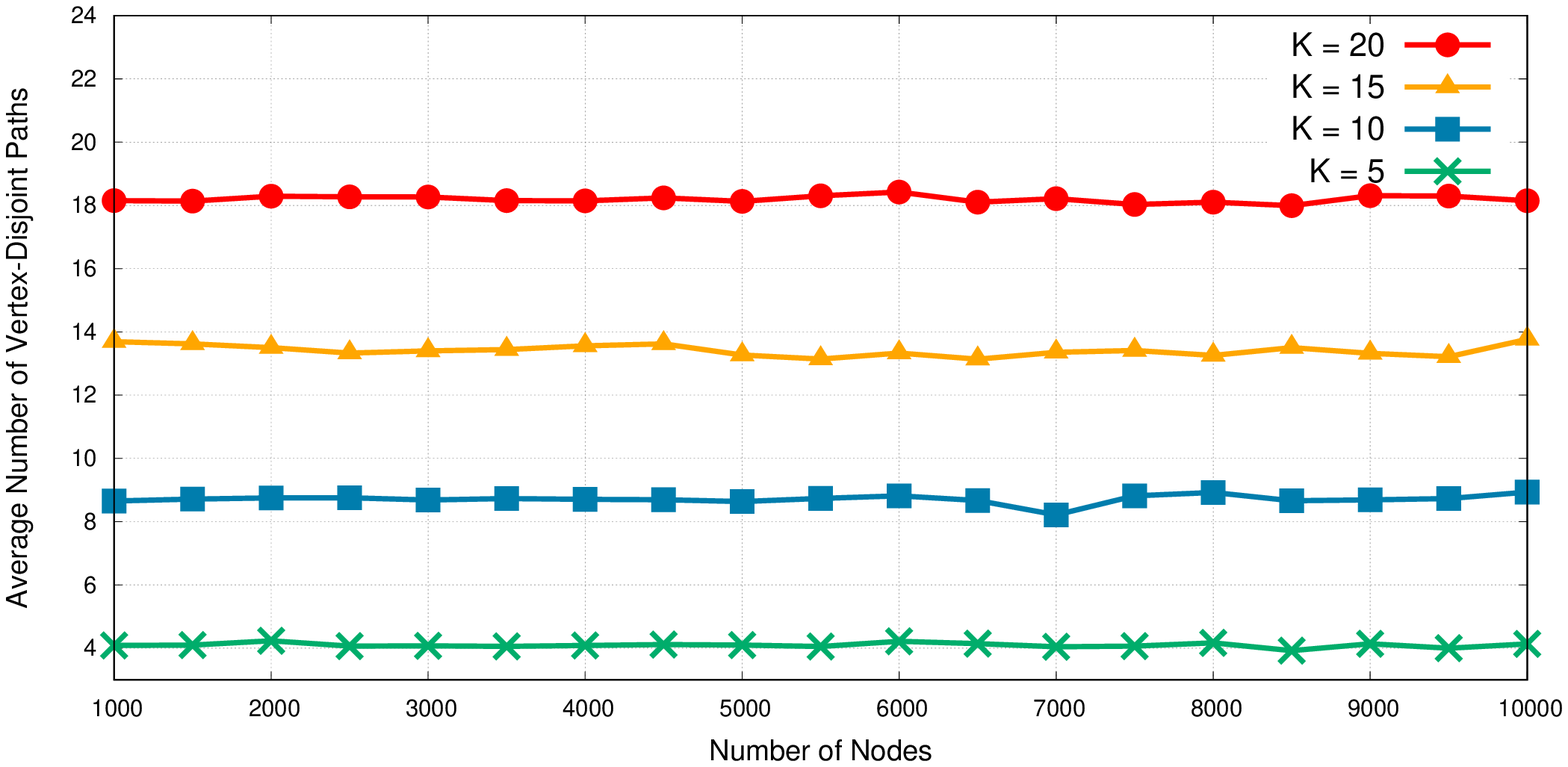}}
\subfloat[Distribution function of path length.]{ \includegraphics[width=0.5\textwidth]{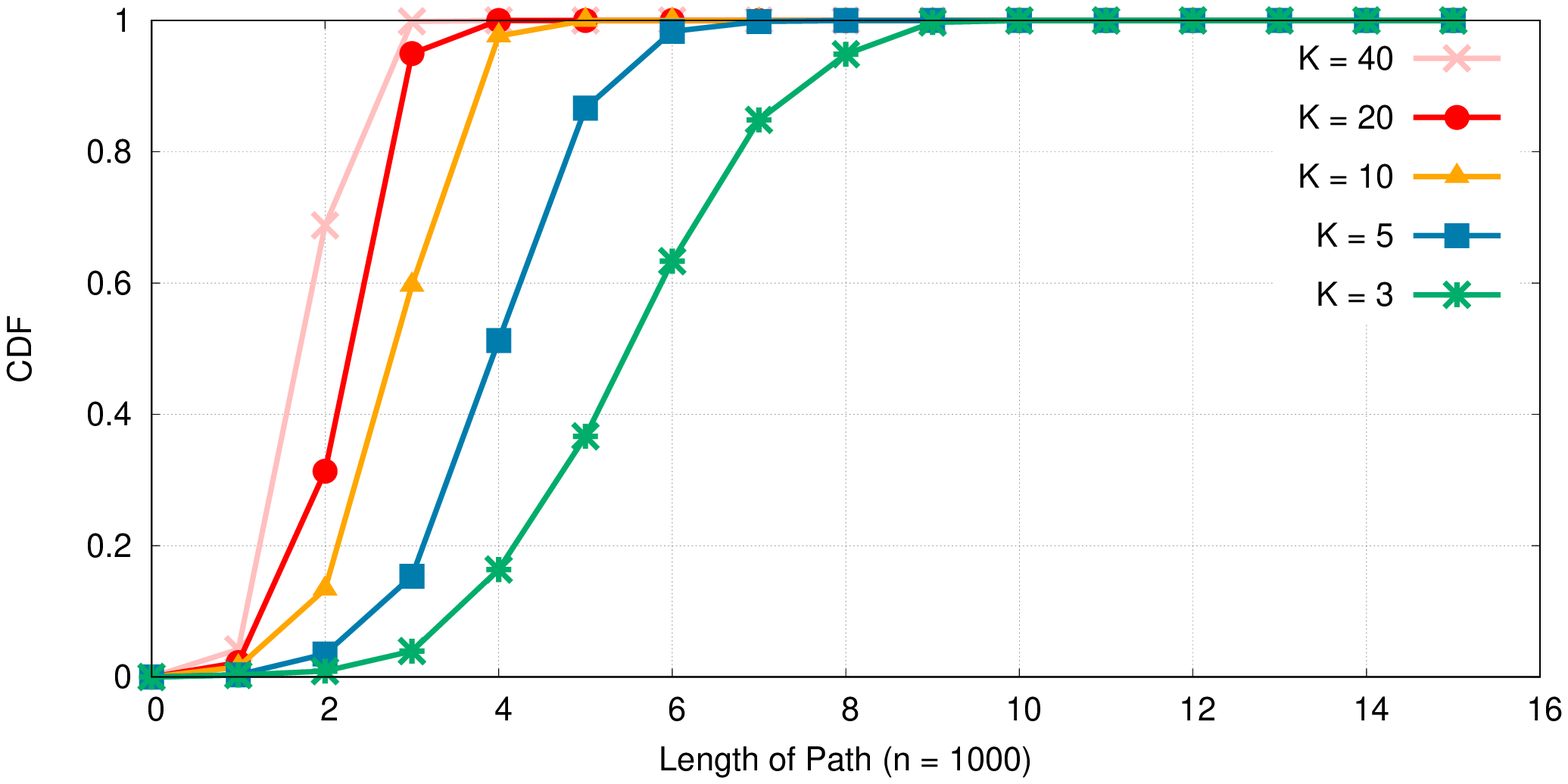}}
\caption{Vertex-disjoint paths in KPsec algorithm.}
\label{disjointPaths}
\end{figure*}

\subsection{Resiliency Against Cooperative Attacks}

In this part, we investigate the resiliency of the proposed algorithm against the cooperative attacks. In our analysis, we consider the resiliency against the cooperative man-in-the-middle attack, as one of the most known harmful attacks against multi-path solutions. However, our method can be generalized to any cooperative attacks. To model this attack, we introduce adversary nodes to the network and then calculate the number of those vertex-disjoint paths that do not contain any adversary node. We select the adversary nodes uniformly at random in our simulations. Fig. (\ref{securePaths}) shows the average and standard deviation of the number of secure vertex-disjoint paths.

\begin{figure*}[t!]
\centering
\subfloat[Average number of secure paths.]{\includegraphics[width=.5\linewidth]{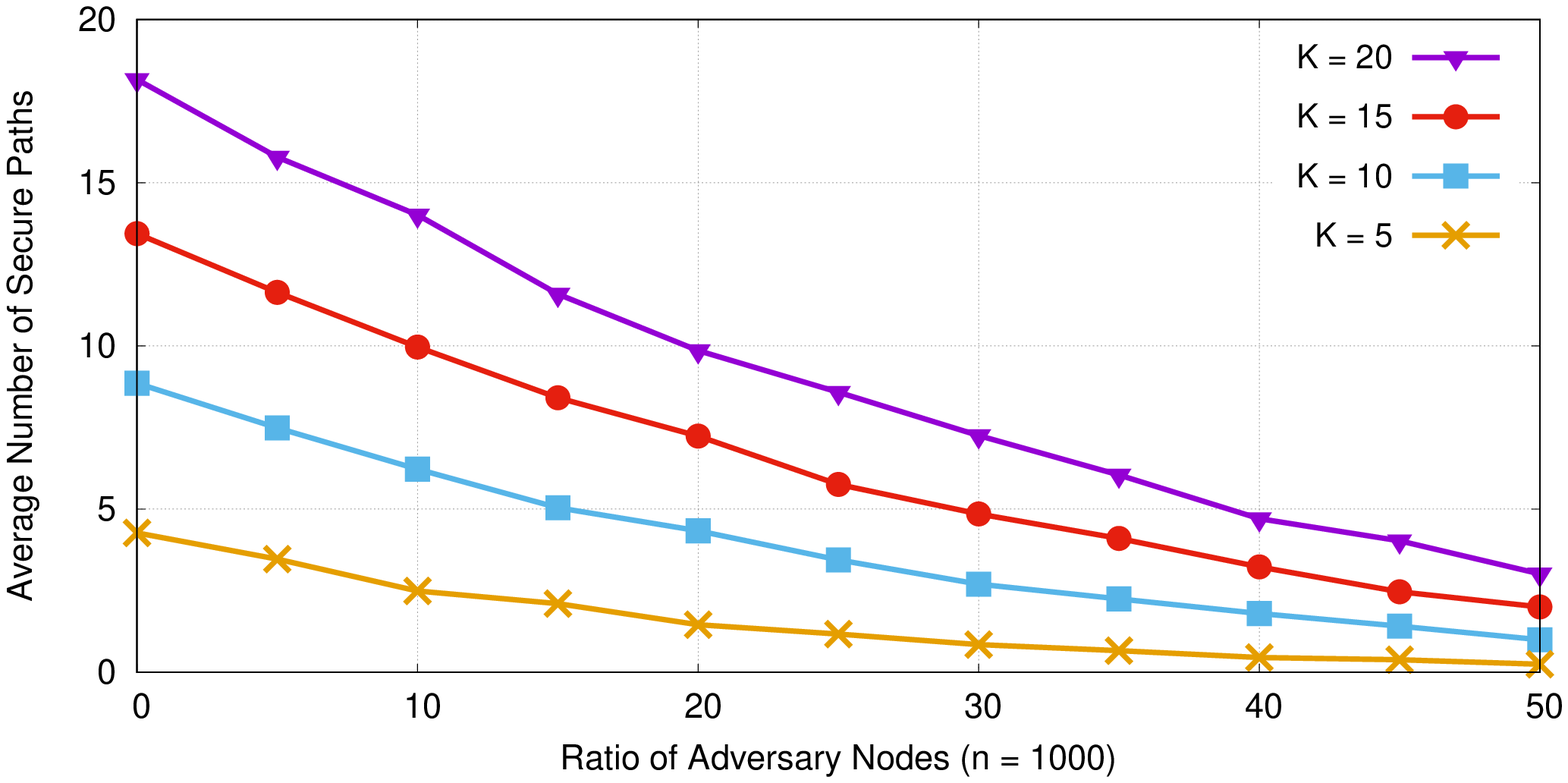}}
\subfloat[The standard deviation of the number of secure paths.]{ \includegraphics[width=.5\linewidth]{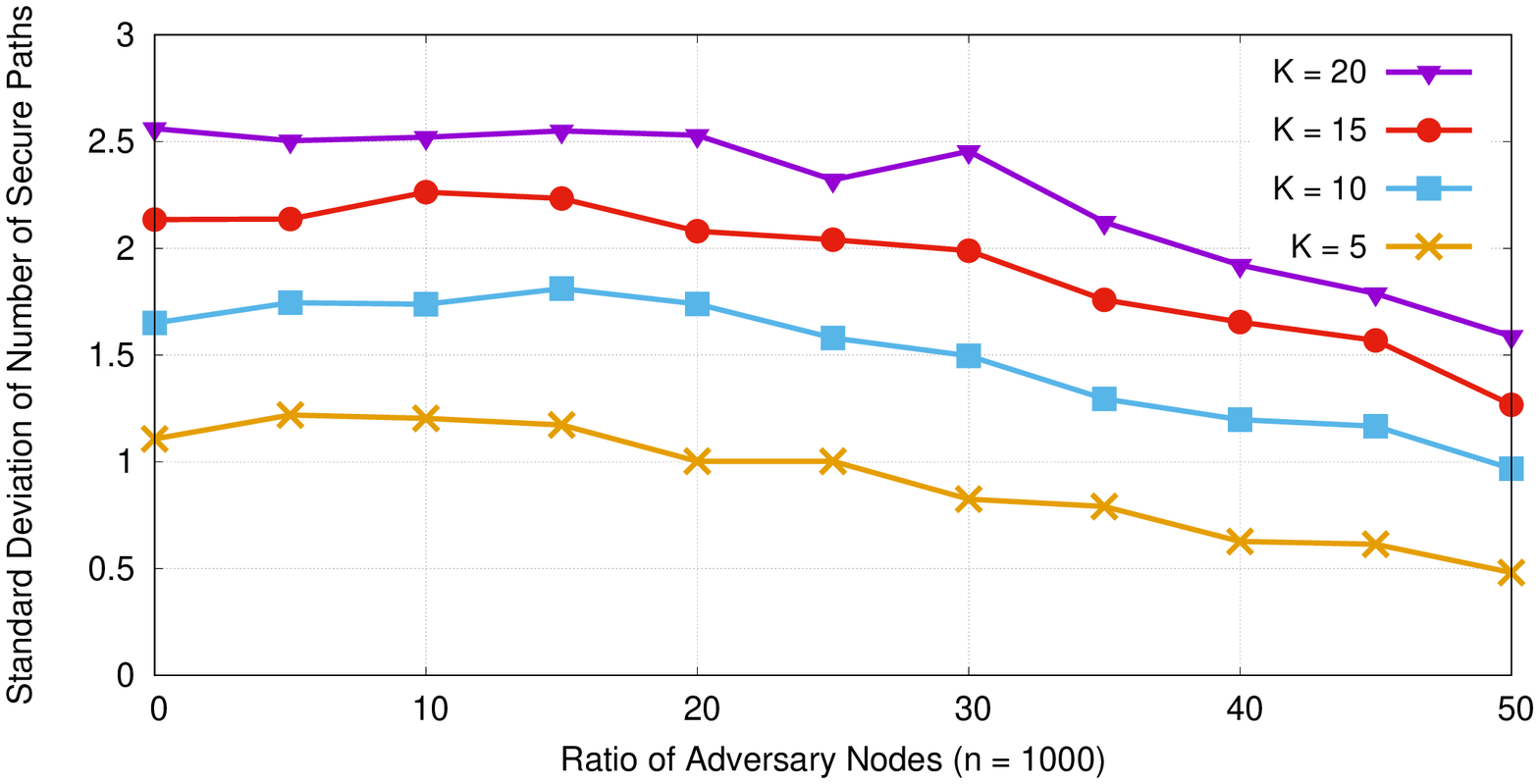}}
\caption{Secure vertex-disjoint paths for different ratios of adversary nodes.}
\label{securePaths}
\end{figure*}

Note that this parameter has to be analyzed together with parameter $\theta$. Recall that $\theta$ represents the number of required shares to rebuild the source node's public key. Let $\theta=\rho$, i.e., the destination requires all the shares from all the paths to become able to reconstruct the key. Thus, for the attacker to successfully perform its cooperative man-in-the-middle attack, it needs to compromise at least one node from every single path. According to Fig. (\ref{securePaths}), the attacker needs to compromise more than half of the network nodes to become successful. Decreasing the value of $\theta$ makes the system more resilient to failures, but it increases the probability of successful attacks. Nevertheless, even for $\theta=\frac{\rho}{2}$, the attacker needs to compromise more than $20\%$ of nodes to perform a successful attack. Considering the results of Fig. (\ref{securePaths}) for different values of $k$, we can conclude that for $\theta= c_1 \rho$, the attacker always needs to compromise $c_2 n$ nodes to perform a successful attack, where $c_1$ and $c_2$ are scaling constants, i.e. $0<c_1,c_2<1$. Hence, the attacker needs to compromise $O(n)$ nodes, even for small $\theta$ values.

\section{Experimental Testbed and Simulation Results}
\label{sec:experimental}

In order to evaluate the performance and security of KPsec in real systems and at scale, we implement it in a 10-node testbed as well as a large-scale ns-2 simulator\cite{ns2}. We implement KPsec and three state-of-the-art key pre-distribution schemes, PAKP \cite{waina}, SST \cite{infocom2011}, and $2$-UKP \cite{TWC2013}. 
We select a combination of both symmetric and asymmetric schemes to make a fair comparison. As we mentioned in \S 3, PAKP \cite{waina} is an asymmetric key pre-distribution scheme with a high probability of connectivity and a logarithmic number of D-E steps. However, it suffers from high energy consumption. Both of 2-UKP \cite{TWC2013} and SST\cite{infocom2011} are symmetric key pre-distribution schemes. 2-UKP has a high key sharing probability and consequently a shorter key-path. However, compromising a few numbers of nodes in this scheme leads to the compromise of many connections. In contrast, SST has a low key sharing probability that does not exceed $0.25$ which means longer key-path and consequently lower performance.

To make the connections of SST and UKP end-to-end secure, we augment these schemes with the algorithm of \cite{globcom05}. For performance parameters, we measure the average throughput, the overall network routing traffic, the key-exchange routing traffic overhead, the end-to-end latency, the key-exchange delay, and the energy consumed for decryption and encryption. For security metrics, we measure the number of intermediate D-E steps, the resiliency against cooperative attacks, the resiliency against passive attacks, and the resiliency against selective node compromise. The rest of this section is divided into four parts discussing testbed experiments, simulation settings, performance evaluation, and the comparison of the security strengths of different techniques.

\subsection{Experimental Testbed}

In our 10-node testbed experiment, each node stores 3 keys where two disjoint paths are used for the key exchange process. We used 10 laptops to perform the experiment by connecting them in an ad-hoc mode via a 5 Megahertz (MHz) wireless channel, 2.412-2.417 GHz. In each scenario, a 5 Megabytes (MB) file is sent from a specific source node to a specific destination. To make a fair comparison, we considered the same physical arrangement for all scenarios. We measured the time of the key-exchange process and the time between sending the first data packet by the source and receiving the last packet by the destination. The overall end-to-end latency is the summation of these times. We further measured the control traffic required for the key-exchange process in each algorithm. The throughput is also measured as the packet delivery ratio over the bandwidth. Table (\ref{tbl::testbed}) shows the result of our testbed experiments.

\begin{table*}[ht]
\caption[Table caption text]{Experimental testbed results.}
\resizebox{1\textwidth}{!}
{\begin{minipage}{\textwidth}
\begin{center}
\begin{tabular}{ | l | l | l | l | l | l | l | }
\hline
&Key-Exch.&Data&End-to-End&Key-Exch.& Throughput& D-E\\
Algorithm&Delay &Transmission&Latency&Traffic&(Mbps)& Steps\\
&(Sec)&Latency(Sec) &(Sec)&(KB)&&(per path)\\ \hline \hline
Aug. UKP& 13.98&214.57&228.55&436.85&0.9873&1\\
Aug. SST& 48.83&274.60&323.43&948.84&0.9579&3.5\\
KPsec& 16.06&209.92&225.98&868.88&0.999&1\\
\hline
\end{tabular}
\end{center}
\label{tbl::testbed}
\end{minipage} }
\end{table*}

Due to the low key-sharing probability of SST (discussed in \S\ref{sec:relatedWork}), the key-path in this scheme is significantly longer than other schemes. This fact leads to longer key exchange delay and higher key exchange traffic. Since, in all cases, the data follows the shortest physical path to reach the destination, the data transmission latency and throughput are expected to yield similar results. However, due to the network traffic and latency caused by the key-exchange process, we observe lower throughput for the augmented SST algorithm. 

\subsection{Simulation Setting}

To evaluate the algorithm at scale, we use the ns-2 simulator. In each scenario, a network with a number of nodes (ranging from $100$ to $200$) is simulated in a $300 \times 300$ square meter area. Network nodes are initiated in random positions, using a uniform distribution in the network area. 
The nodes are assumed to be mobile and follow Random walk mobility model of ns-2, with zero pause time and varying speed in the interval $[0\quad 5]$ meter per second. The distance model is chosen for sending and receiving with the communication range of $100$ meters for each node. The channel bandwidth is set to $1Mbps$. 
All simulations are performed using AODV routing protocol to find the shortest physical path. For two-layer routing in PAKP, the algorithm of \cite{TNSE} is used to find the optimal path with the smallest number of D-E steps and shortest physical lengths. Different scenarios are simulated with different numbers of connections between $10$ and $20$. To keep the comparisons fair, all connections are chosen randomly but once selected, the same connections are used for comparing different schemes. The generated traffic is FTP running on TCP Tahoe. In each connection, the source node sends a file with a size of $5$MB to its destination. All simulations are repeated $20$ times, and figures show average values calculated over all runs. For the key pre-distribution phase, the keyring size is set to $k=10$, and for end-to-end algorithms, we use five disjoint paths in each scenario.

\subsection{Performance Evaluation}
\label{performanceEvaluation}

We choose the network throughput measured for successful packet delivery, the average end-to-end latency per connection, the average key-exchange delay, the average routing traffic per connection, the key-exchange routing traffic, and the consumed energy as performance evaluation metrics.
Fig. (\ref{throughput}) compares the throughput of different scenarios. While Fig. (\ref{throughput}a) shows the average throughput for $10$ fixed connections and the different number of nodes, Fig. (\ref{throughput}b) shows the results for different numbers of connections in a $100$-nodes setting. Since increasing the number of connections increases congestion, the network throughput is slightly decreased as the number of connections increases. We observe that the factor that impacts the network throughput the most is the physical path length. Since, in end-to-end solutions, the data traffic follows the shortest physical path, KPsec's throughput is higher in comparison with SST, 2-UKP, and simple PAKP, as shown in Fig. (\ref{throughput}). It is worth noting that augmented SST and augmented UKP have similar throughput as KPsec. Since $2$-UKP has a significantly higher number of overlay edges, it has the shortest physical path among the compared schemes. This fact leads to $2$-UKP outperforming other key pre-distribution schemes. An improvement of more than $7.5\%$ is also notable for KPsec compared to PAKP. 

\begin{figure*}[t!]
\centering
\subfloat[Different number of nodes.]{\includegraphics[width=.5\linewidth]{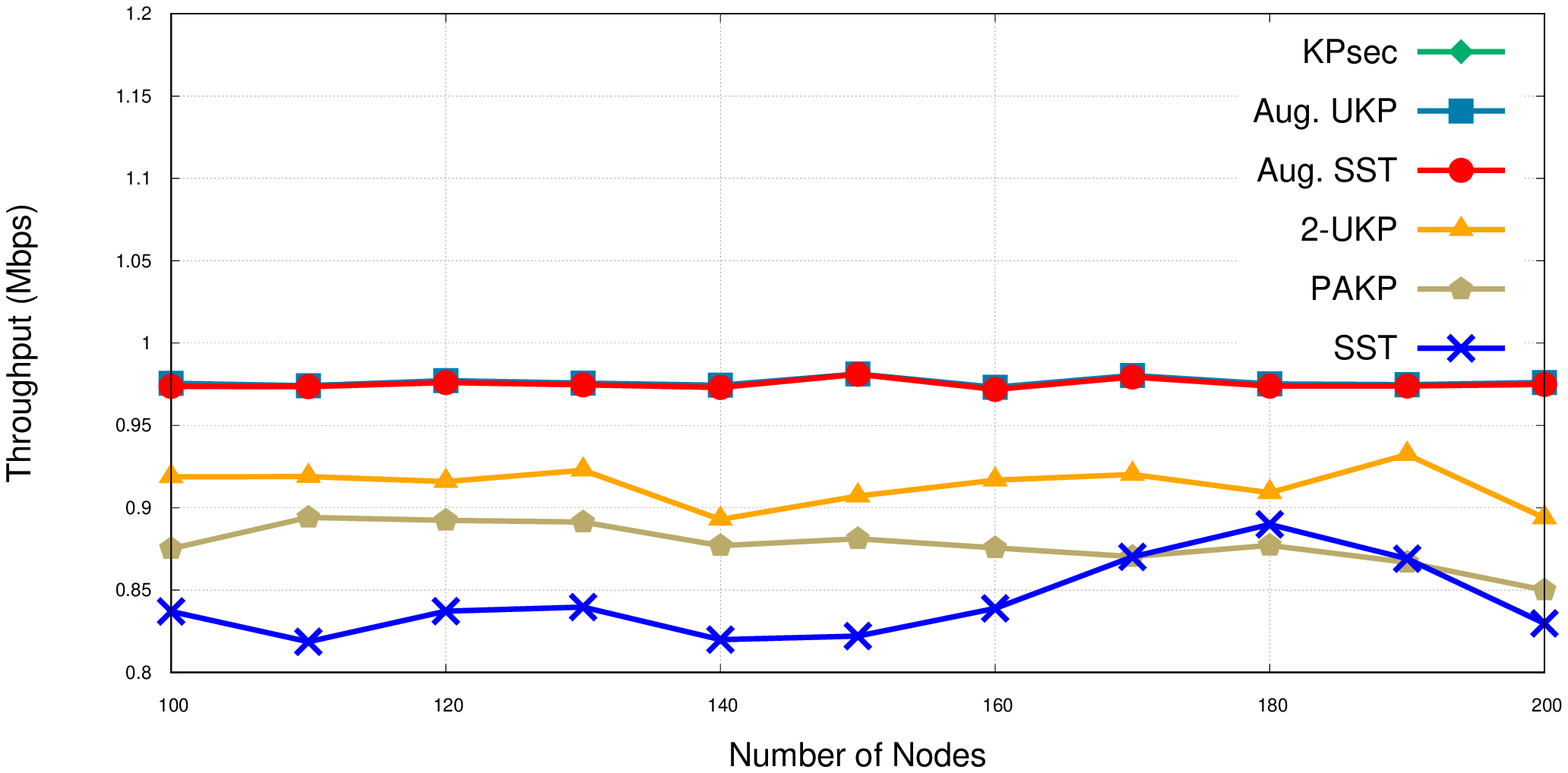}}
\subfloat[Different number of connections.]{ \includegraphics[width=.5\linewidth]{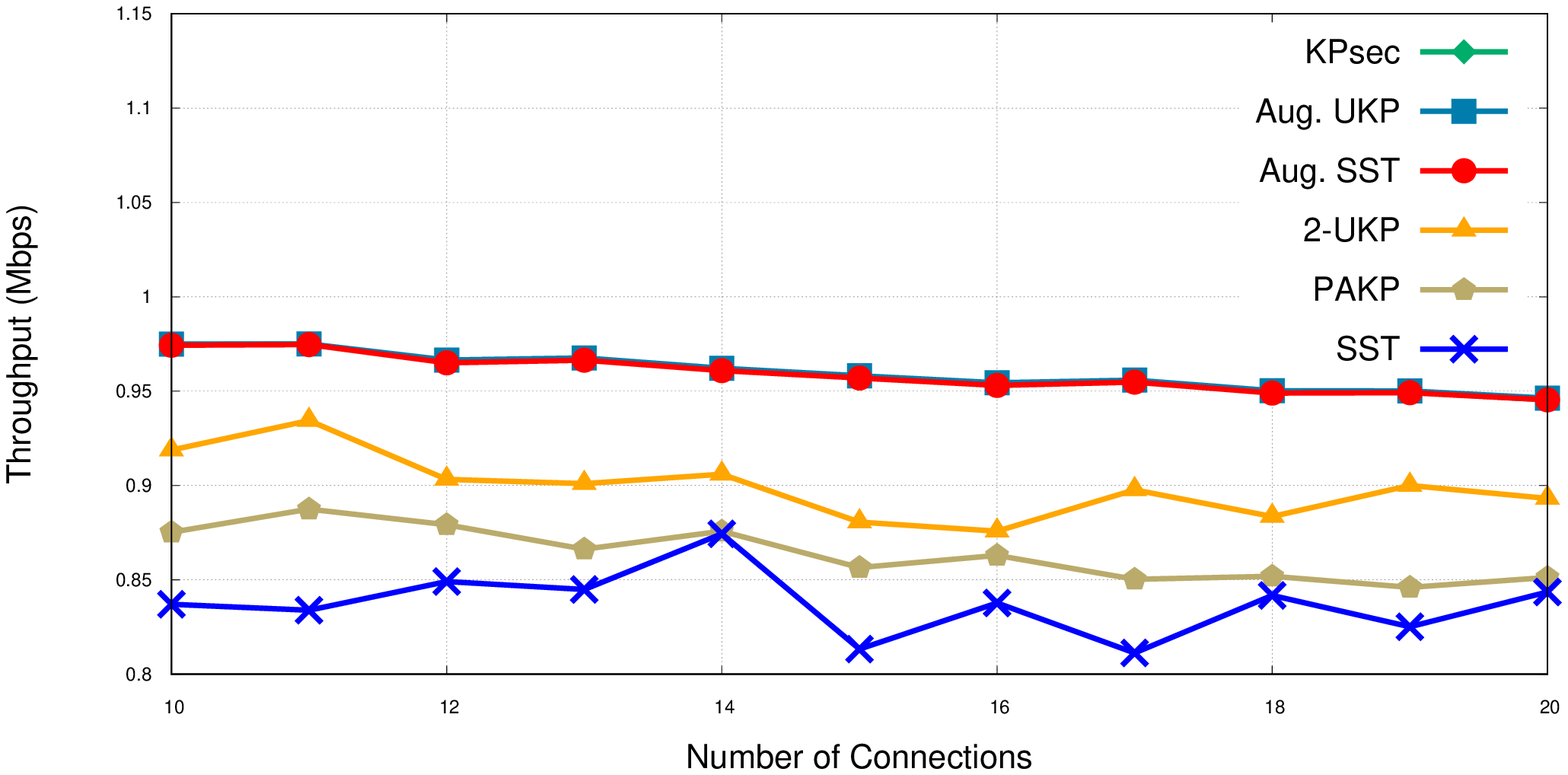}}
\caption{A comparison of the average network throughput.}
\label{throughput}
\end{figure*}

We next measure the average end-to-end latency per connection. Each connection starts at a time randomly chosen within the interval $[0,60]$ seconds. The end-to-end latency for each connection ends when the destination receives the last packet of the file. The average latency per connection is shown in Fig. (\ref{delay}) for different numbers of nodes and connections. Consistent with our testbed results, all the end-to-end solutions exhibit similar performance. KPsec shows significant improvement of more than $50\%$ compared to SST, $2$-UKP, and PAKP. While PAKP slightly improves the performance compared to SST, $2$-UKP outperforms both of them.

We also measure the key-exchange delay. Fig. (\ref{delay-keyExchange}) shows the results for different numbers of nodes and connections. Recall that the key-exchange process in KPsec has one additional step in comparison with the algorithm of \cite{globcom05}. In KPsec, after receiving key shares, the destination node encrypts its public key and sends it to the source node. This extra step imposes some delay which leads to augmented UKP outperforming KPsec for this metric. However, the longer key-path in SST increases the augmented SST key-exchange's delay.

\begin{figure*}[t!]
\centering
\subfloat[Different number of nodes.]{\includegraphics[width=.5\linewidth]{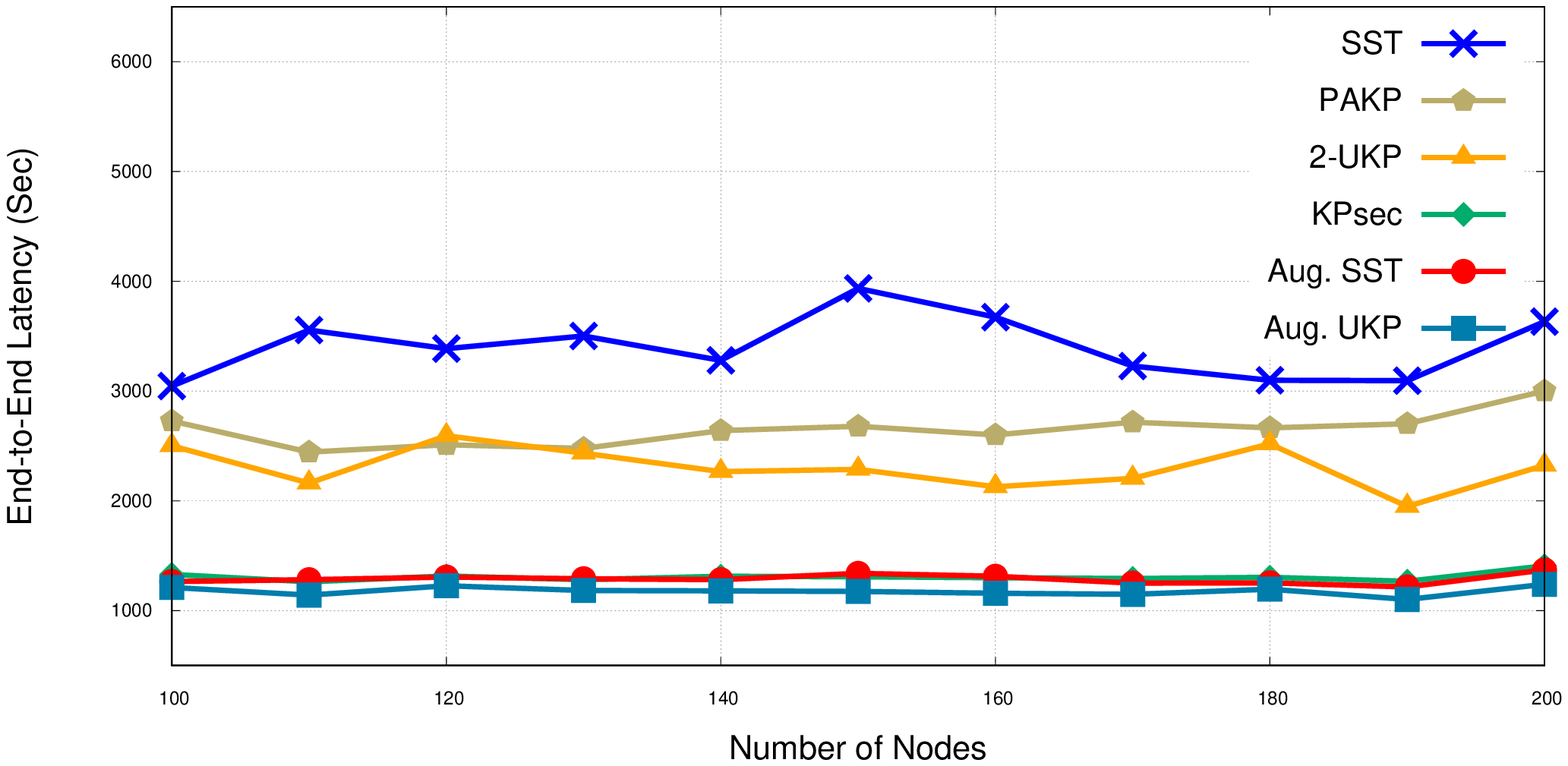}}
\subfloat[Different number of connections.]{\includegraphics[width=.5\linewidth]{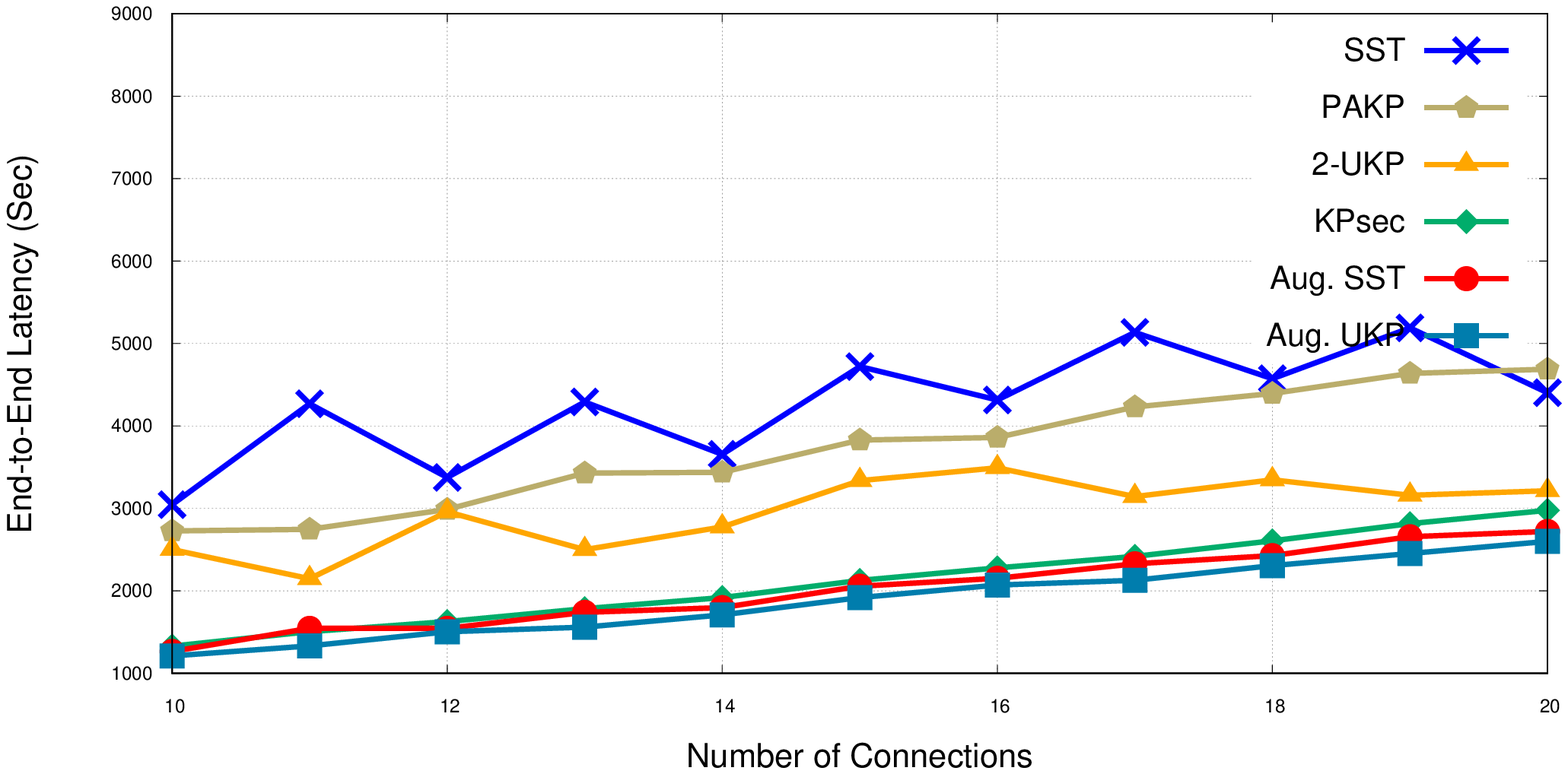}}
\caption{A comparison of the average end-to-end latency per connection.}
\label{delay}
\end{figure*}

\begin{figure*}[t!]
\centering
\subfloat[Different number of nodes.]{\includegraphics[width=.5\linewidth]{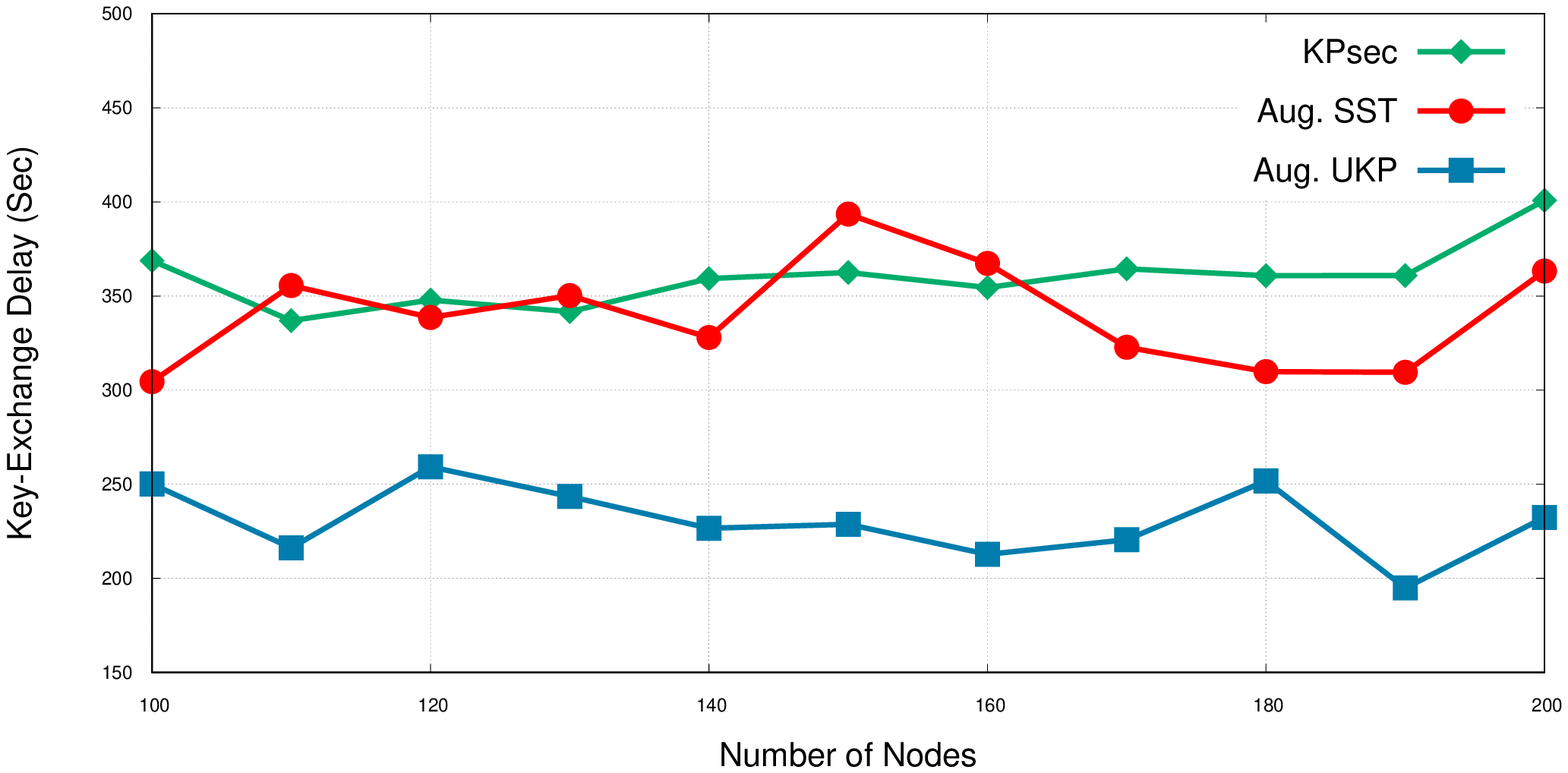}}
\subfloat[Different number of connections.]{\includegraphics[width=.5\linewidth]{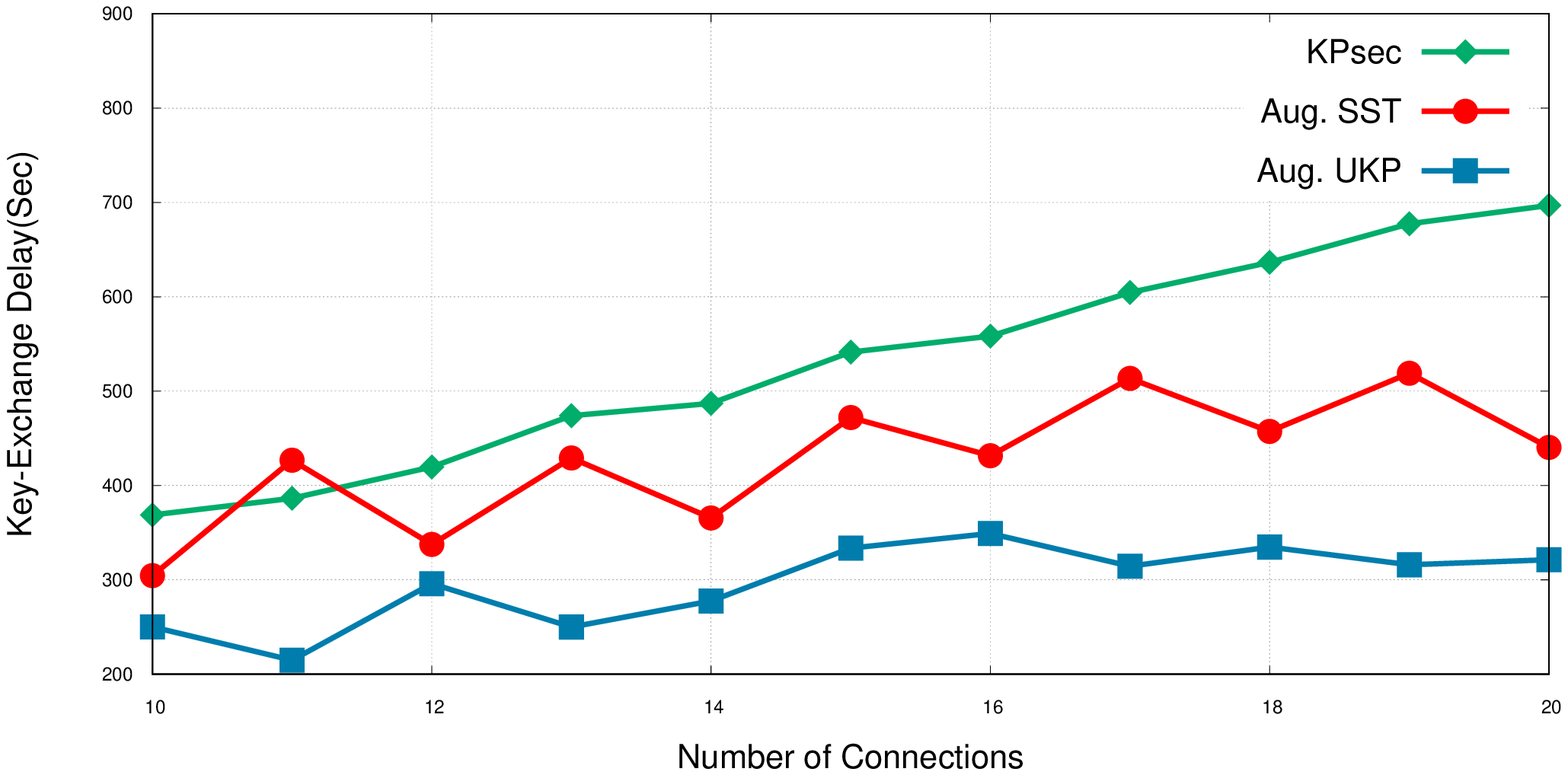}}
\caption{A comparison of the average key-exchange delay.}
\label{delay-keyExchange}
\end{figure*}

We next measure the routing traffic overhead generated for sending the encrypted $5$MB files. Fig. (\ref{routing}) shows this parameter measured in MB. Again, a longer physical path degrades this parameter for both SST and augmented SST. Fig. (\ref{routing-keyExchange}) shows the key-exchange traffic for end-to-end algorithms. This figure shows that KPsec and augmented UKP generate almost similar volumes of key-exchange traffic which is lower than augmented SST. While not shown here, a network with stationary nodes is also simulated. The results show the same pattern for all the mentioned parameters. However, for routing traffic, the network with stationary node shows an average of 9\% less overall routing traffic and 11.5\% less key-exchange traffic overhead. We have used the setting of \cite{energy} to calculate the consumed energy for encryption and decryption processes in our simulations. Fig. (\ref{energy}) shows the results for a network with different numbers of nodes. Since PAKP encrypts data asymmetrically, it consumes an order of magnitude more energy in comparison with other algorithms. Thus, we remove its curve for better representation. The SST scheme has more intermediate D-E steps in comparison with $2$-UKP. Thus, it consumes energy at a rate almost twice as large as that of $2$-UKP. KPsec, in turn, outperforms the key pre-distribution schemes by more than $70\%$. Since the data transmission process in all end-to-end algorithms follows the shortest physical path and all of them use symmetric encryption, their performance with respect to this metric is similar.

\begin{figure*}[t!]
\centering
\subfloat[Different number of nodes.]{\includegraphics[width=.5\linewidth]{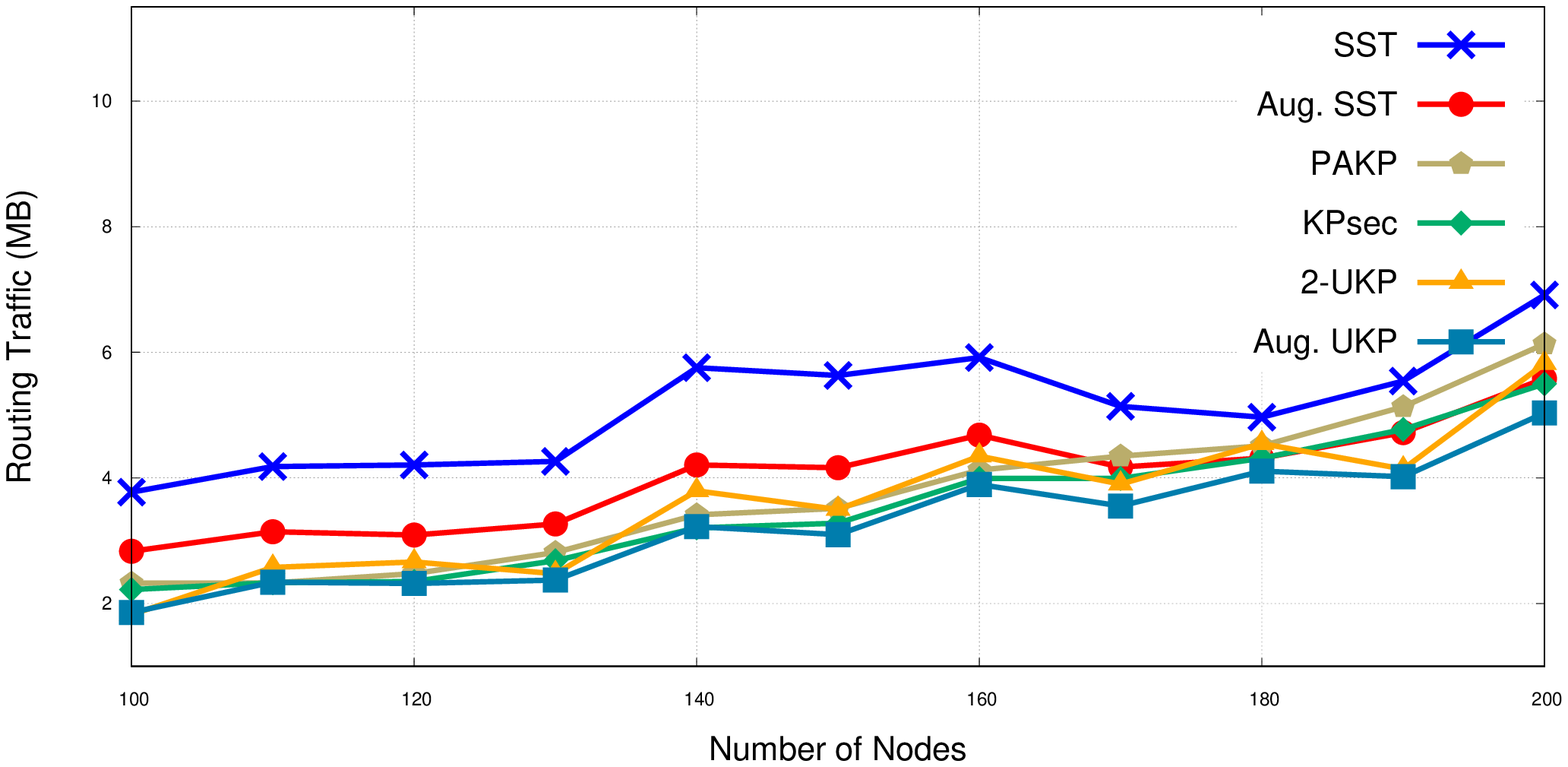}}
\subfloat[Different number of connections.]{ \includegraphics[width=.5\linewidth]{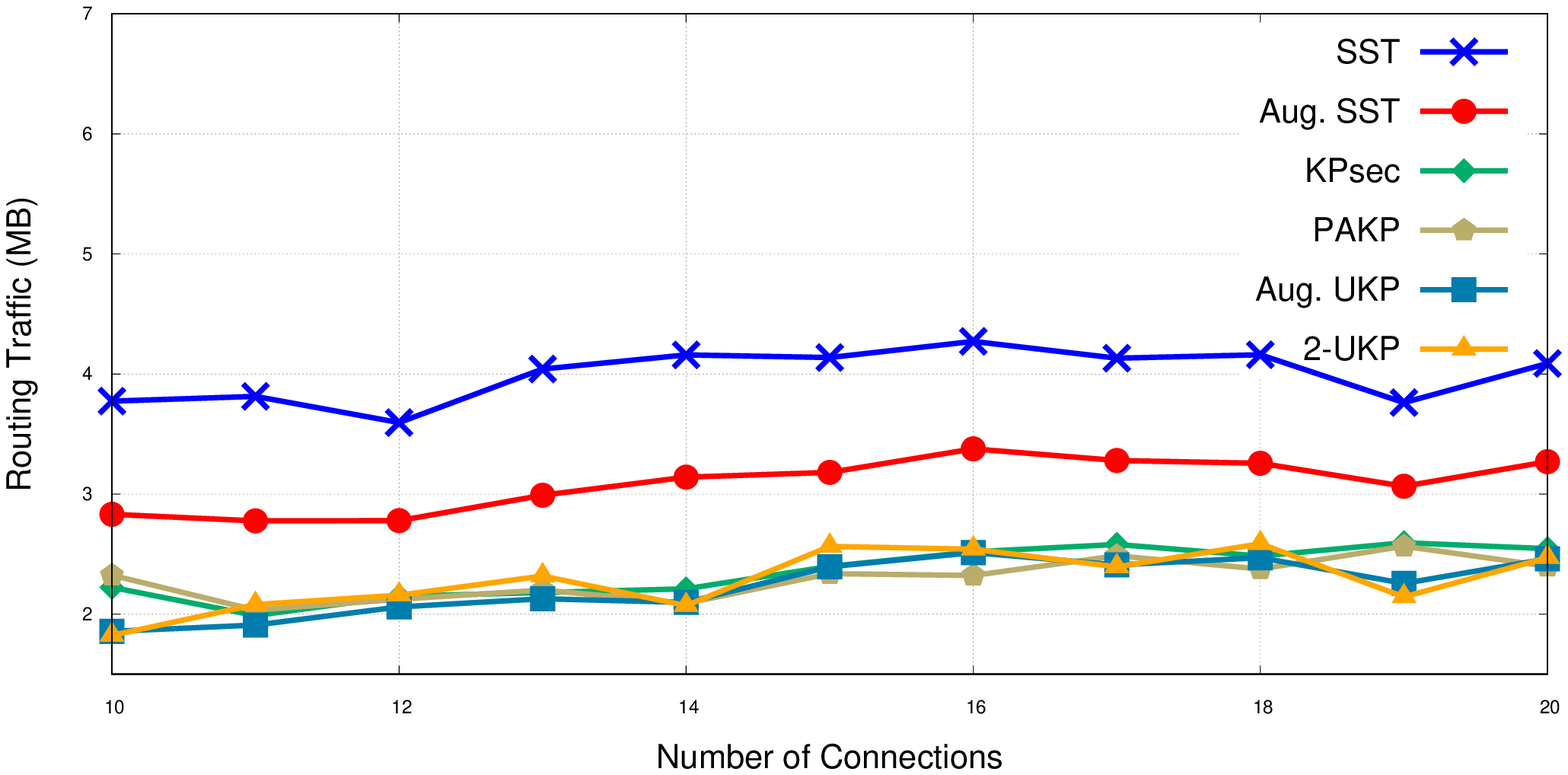}}
\caption{A comparison of the average network routing traffic.}
\label{routing}
\end{figure*}

\begin{figure*}[t!]
\centering
\subfloat[Different number of nodes.]{\includegraphics[width=.5\linewidth]{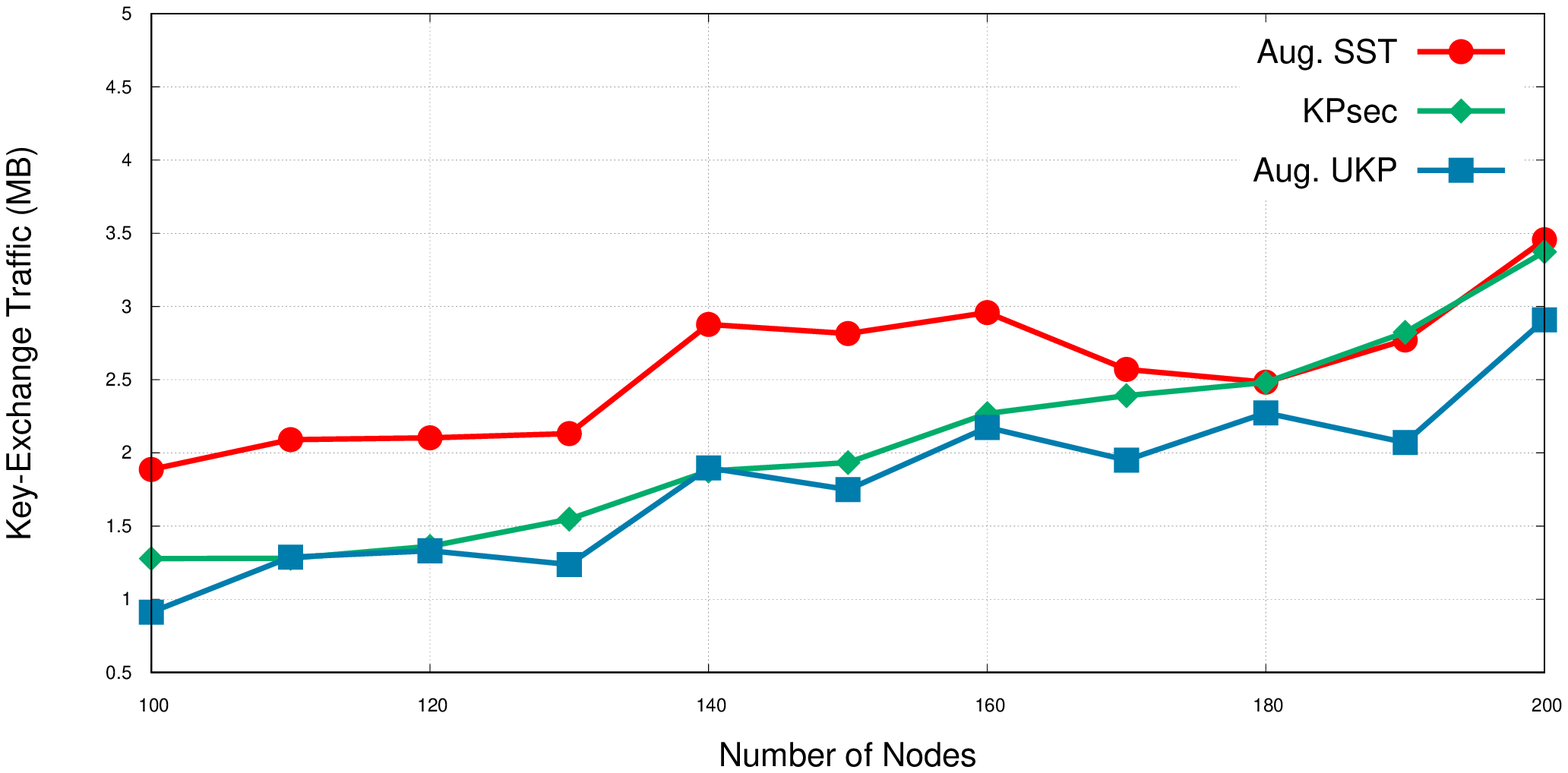}}
\subfloat[Different number of connections.]{ \includegraphics[width=.5\linewidth]{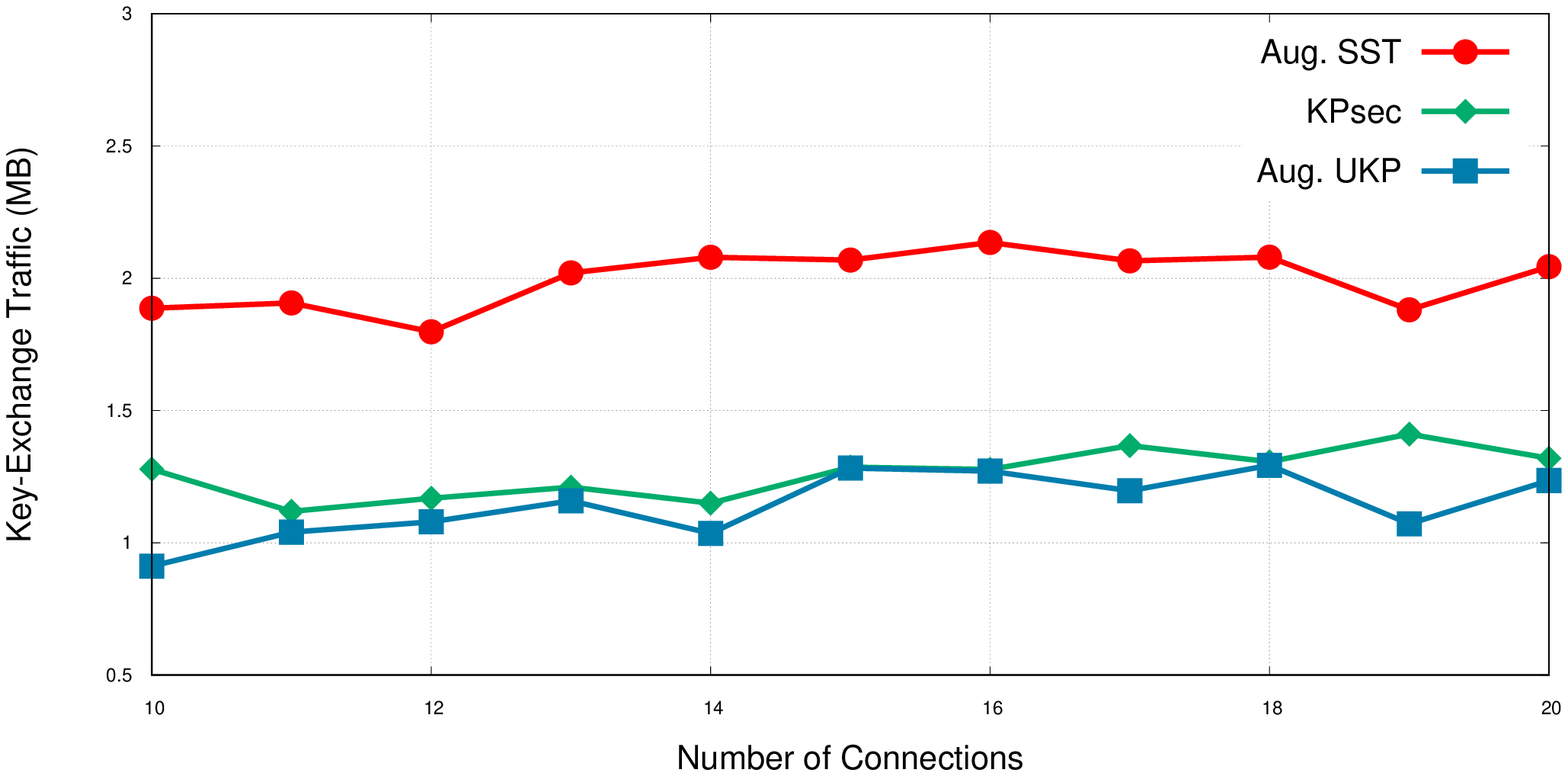}}
\caption{A comparison of the average key-exchange traffic.}
\label{routing-keyExchange}
\end{figure*}

\begin{figure*}
\centering
\includegraphics[width=0.5\linewidth]{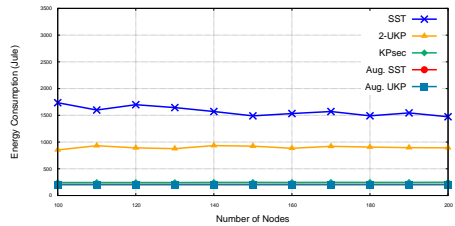}
\caption{Average energy consumed for encryption and decryption processes.}
\label{energy}
\end{figure*}

Overall, our results show that despite the fact that end-to-end solutions including KPsec add some delay and traffic overhead, they remove the path stretch and consequently result in better overall performance. They also show that, while the performance of \cite{globcom05} depends on its underlying key pre-distribution scheme, generally it is close to KPsec, performance-wise. However, in the following part, we show that KPsec has significant security advantages over \cite{globcom05}.

\subsection{KPsec Improves Security}
\label{securityComparison}

We first measure the average intermediate D-E steps in each disjoint path, as a basic security metric. Fig. (\ref{DEsteps}a) shows this parameter for different schemes. This figure shows that the number of intermediate D-E steps in the KPsec algorithm is significantly lower than those of augmented SST and augmented UKP. While we represented a general analysis for resiliency against cooperative attacks in Fig. (\ref{Fig::reliability}), we combine the results of Fig. (\ref{DEsteps}a) with the mentioned analysis to show the resiliency of different algorithms. Fig. (\ref{DEsteps}b) shows the results for a network with 100 nodes, $10\%$ of them being compromised, and different numbers of disjoint paths. As Fig. (\ref{DEsteps}b) shows, KPsec approaches to perfect resiliency with only three disjoint paths, while this number is 5 and 8 for augmented UKP and augmented SST, respectively. That is, KPsec can use a lower number of disjoint paths to achieve higher performance for the same level of resiliency against cooperative attacks.

\begin{figure*}[t!]
\centering
\subfloat[Average number of D-E steps.]{\includegraphics[width=0.5\linewidth]{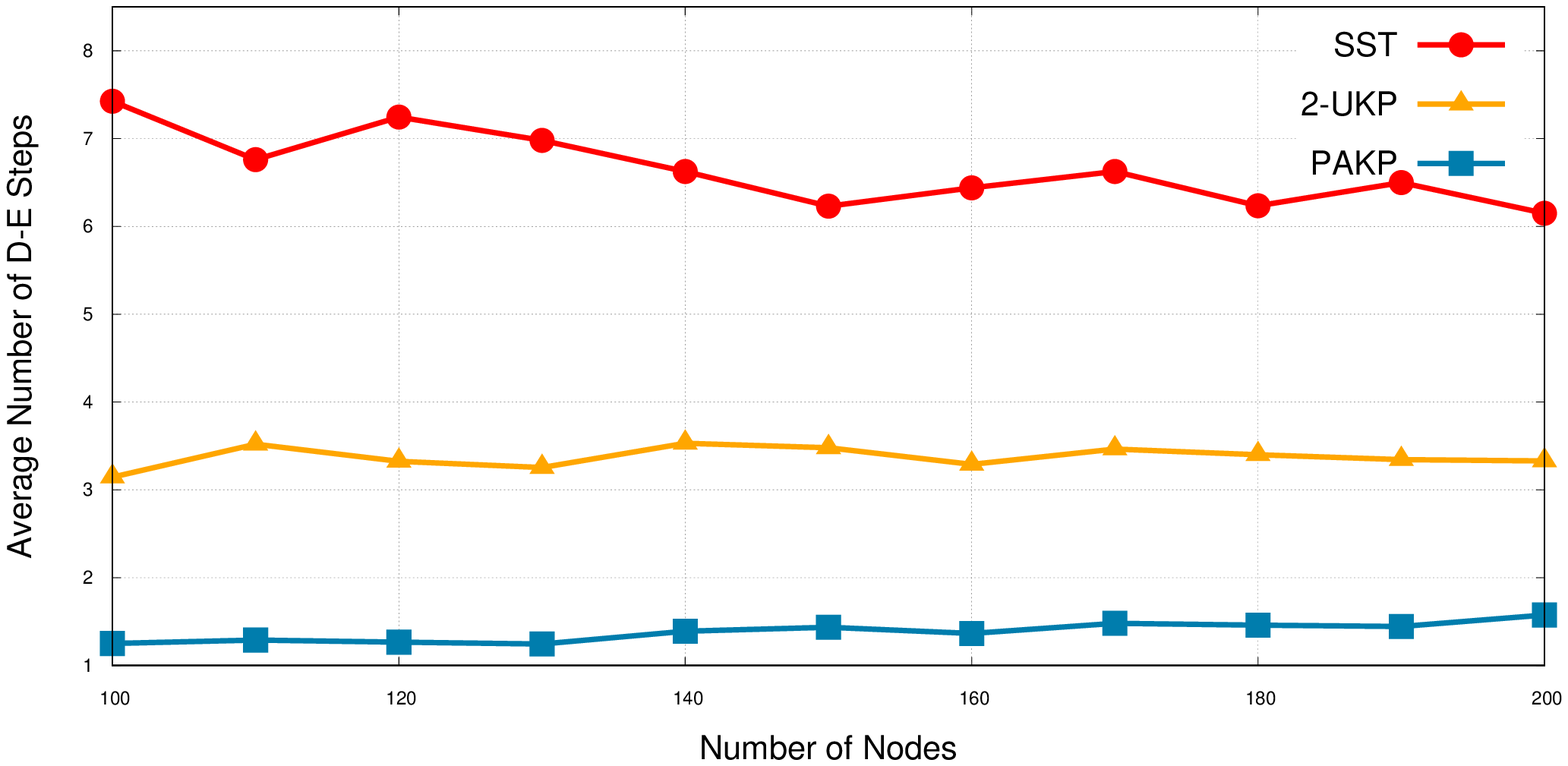}}
\subfloat[Resiliency against cooperative attacks.]{\includegraphics[width=0.5\linewidth]{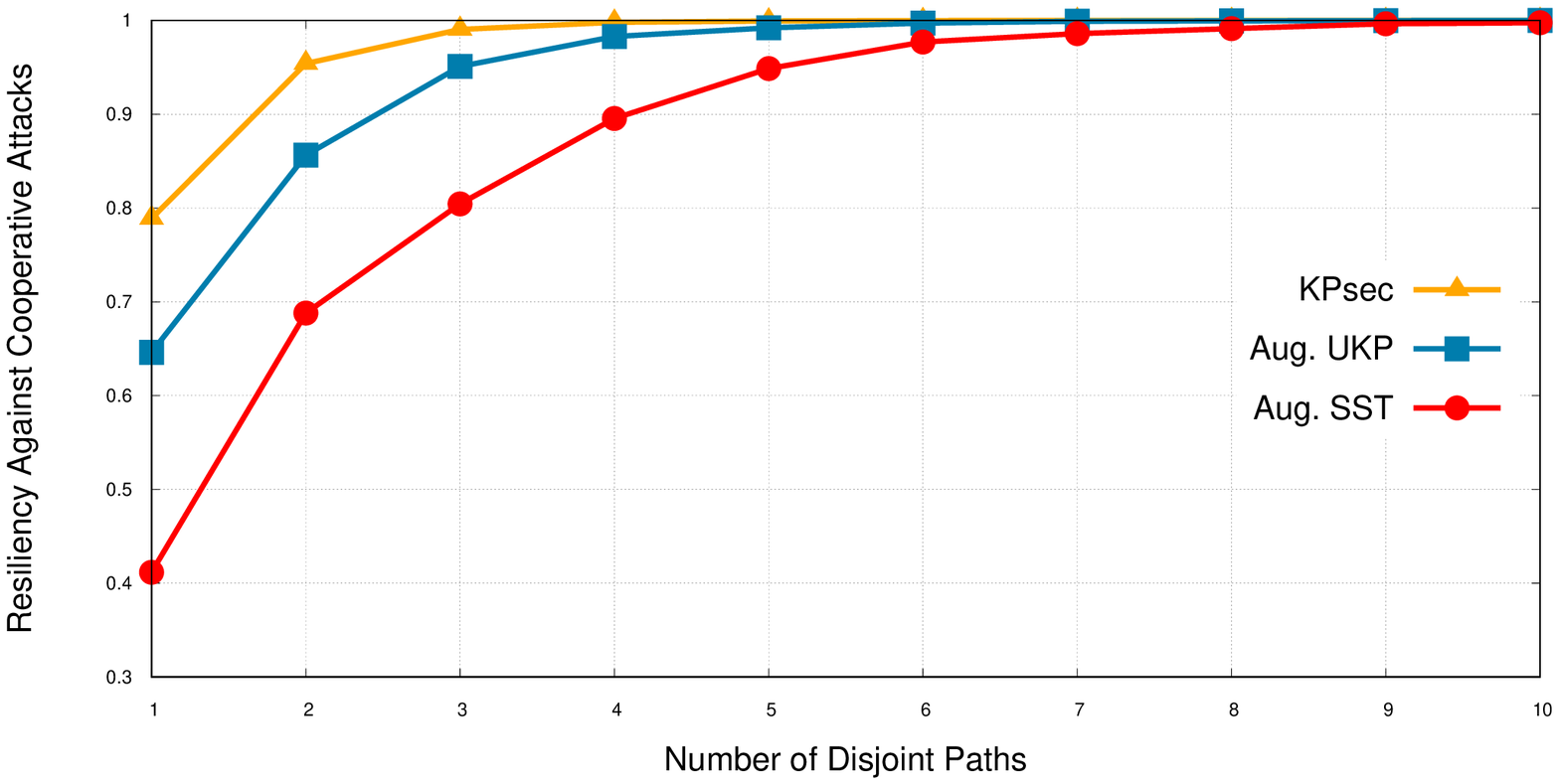}}
\caption{Average number of D-E steps and its effect on the network resiliency.}
\label{DEsteps}
\end{figure*}

\begin{table*}[ht]
\caption[Table caption text]{A brief security comparison.}
\resizebox{1\textwidth}{!}
{\begin{minipage}{\textwidth}
\centering
\begin{tabular}{ | l | l | l | l | l | l | l | }
\hline
& Encryption &Average&Cooperative&\#Nodes&Secure&Subject \\
Algorithm&System& \#D-E&Attack&for succ. &Inform.&to Passive \\
&&(per path)&Resiliency&SNC &Leakage&Attack\\ \hline \hline 
Aug. UKP&Symmetric&4.145&0.95&10&Positive&Positive\\
Aug. SST&Symmetric&8.425 &0.80&23&Positive&Positive\\
KPsec&Asymmetric&2.25 &0.99&99&Negative&Negative\\
\hline
\end{tabular}
\label{tbl::security}
\end{minipage} }
\end{table*}

One of the main advantages of KPsec is its resiliency against passive attacks such as eavesdropping. Since all transferred keys in KPsec are public, an attacker cannot degrade the computational hardness of the cryptosystem and consequently cannot compromise the secrecy of data transmission, by eavesdropping. In contrast, a large enough number of compromised nodes in augmented SST and augmented UKP enable the attacker to access the pairwise key only by eavesdropping. Even if the attacker cannot eavesdrop all the key pieces in the algorithm of \cite{globcom05}, it can access some key pieces and generate other parts by a brute-force search. By considering the fact that the computational hardness of symmetric cryptosystems exponentially increases by the increment of key length \cite{keyManagNIST}, knowing any portion of the key is equivalent to the shorter key length, and hence, it logarithmically decreases the computational complexity of the brute-force attack \cite{bruteForce}. In other words, the algorithm of \cite{globcom05} suffers from secure information leakage. The next advantage of KPsec over symmetric end-to-end solutions is the geographical distance of its overlay neighbors. In symmetric solutions, the attacker can perform a jamming attack and force the source node to establish its connection through a specific neighbor that the attacker desires (i.e., a compromised node). In KPsec, by contrast, since the overlay neighbors are, in most cases, physically far away and the physical neighbors carry only encrypted messages, this attack becomes ineffective. 

The next important security metric is the number of nodes that an attacker needs to capture in order to compromise the security of the network as a whole. This metric is sometimes referred to as the resiliency against selective node capture (SNC) attacks. In symmetric key pre-distribution schemes, the key-pool includes a limited number of secret keys. Hence, if the attacker knows about the keyring arrangement, it can selectively capture nodes to get access to the entire key-pool. In $2$-UKP and SST, this number is $O(\sqrt n)$ and $O(k)$ nodes, respectively \cite{optimalRouting}. Under PAKP, by capturing each node, the attacker accesses only several public keys and only one private key. Hence, the attacker needs to capture $O(n)$ nodes to access all private keys. Table (\ref{tbl::security}) summarizes and compares the security of these schemes for a network with 100 nodes and three disjoint paths.

\section {Conclusion}
\label{conclusion}

In this paper, we propose KPsec to address two main shortcomings of existing key pre-distribution schemes: the intermediate D-E steps and path stretch. KPsec establishes a pairwise key and makes end-to-end connections secure by deploying a key-exchange process using overlay disjoint paths. We evaluate the performance and security of KPsec as well as three state-of-the-art key pre-distribution schemes using real testbed and large-scale simulations. Our results show improvements in network throughput, end-to-end latency, and energy consumption. This is because the overhead of deploying multiple overlay disjoint paths is negligible in comparison with the performance benefits gained by removing the path stretch. We show that KPsec requires fewer number of disjoint paths to achieve the same level of resiliency against cooperative attack compared to other multi-path solutions. Furthermore, contrary to other algorithms, KPsec is resilient against passive attacks and does not suffer from the secure information leakage. KPsec's main goal is to protect the confidentiality of communications. We leave the availability analysis for future work.

\nocite{*}
\bibliographystyle{splncs04}
\bibliography{References}

\end{document}